\newtheorem{lemma}{Lemma}
\newtheorem{corollary}{Corollary}
\newtheorem{proposition}{Proposition}
\newcommand{{\Cd}}{{\mathbb{C}^d}}
\newcommand{{\C}}{{\mathbb{C}}}
\newcommand{\ketbra}[2]{| #1\rangle\!\langle #2|}
\newcommand{\tr}[1]{\mbox{Tr}\left[ #1\right]}
\newcommand{\trr}[1]{\mbox{Tr}[ #1 ]}
\newcommand{\expp}[1]{e^{#1}}
\newcommand{\mL}{{\mathbf{\Lambda}}}
\newcommand{\omL}{\overline{\mathbf{\Lambda}}}
\newcommand{\TL}{{T^\Lambda}}
\begin{document}

\title{Pure non-Markovian evolutions}

\author{Dario De Santis}
\affiliation{ICFO-Institut de Ciencies Fotoniques, The Barcelona Institute of Science and Technology, 08860 Castelldefels (Barcelona), Spain}
\affiliation{Scuola Normale Superiore, I-56126 Pisa, Italy}
\orcid{0000-0001-5666-1770}
\maketitle

\begin{abstract}
{\bf
Non-Markovian dynamics are characterized by information backflows, where the evolving open quantum system retrieves part of the information previously lost in the environment. Hence, the very definition of non-Markovianity implies an initial time interval when the evolution is noisy, otherwise no backflow could take place. 
We identify two types of initial noise, where the first has the only effect of degrading the information content of the system, while the latter is essential for the appearance of non-Markovian phenomena. 
Therefore, all non-Markovian evolutions can be divided into two classes: noisy non-Markovian (NNM), showing both types of noise, and pure non-Markovian (PNM), implementing solely essential noise. 
We make this distinction through a timing analysis of fundamental non-Markovian features. 
First, we prove that all NNM dynamics can be simulated through a Markovian pre-processing of a PNM core. We quantify the gains in terms of information backflows and non-Markovianity measures provided by PNM evolutions. 
Similarly, we study how the entanglement breaking property behaves in this framework and we discuss a technique to activate correlation backflows. Finally, we show the applicability of our results through the study of several well-know dynamical models. 
}
\end{abstract}

\section{Introduction} 
Open quantum system dynamics describe the evolution of quantum systems interacting with an external system, typically represented by the surrounding environment. The unavoidable nature of this interaction made this topic of central interest in the field of quantum information~\cite{book_B&P,book_R&H}.
This reciprocal action may lead to two different regimes for the information initially stored into our system. 
An evolution is called Markovian whenever there are no memory revivals and therefore the system shows a monotonic information degradation.
On the contrary, non-Markovian evolutions are those showing information backflows, where partial information stored into the system is first lost into the environment and then retrieved at later times (for reviews on this topic see~\cite{REVDARIUSZ,rev_RHP,rev_BLP,revmod2,rev4}).
Hence, the very definition of these evolutions implies the existence of an initial time interval when the dynamics is noisy, otherwise no backflow from the environment could be possible.

In this work, we address the question of whether all the initial noise applied by an evolution is necessary for the following non-Markovian phenomena. 
We identify two noise types. While the first, that we call \textit{useless}, is not necessary for information backflows, only the information lost with \textit{essential} noise takes part to the characteristic non-Markovian phenomena. 
Starting from this observation, we classify non-Markovian evolutions as \textit{noisy} or \textit{pure}, where the first have both types of noise, while the second implements essential noise only. Hence, the information initially lost with pure non-Markovian (PNM) evolutions always take part to a later backflow, which occur even in time intervals starting immediately after the beginning of the interaction with the environment. Instead, the useless noise of noisy non-Markovian (NNM) evolutions has the sole result of damping the information content of the open system and it diminishes the amplitude of backflows.

This classification is in close analogy with the structure of quantum states, where mixed states can be obtained through noisy operations on pure states. 
Similarly, NNM evolutions can be obtained via Markovian pre-processings  of  PNM evolutions, which we call PNM \textit{cores}. 
Moreover, as well as pure states allow the best performances in several scenarios and protocols, PNM evolutions are characterized by the largest information revivals and non-Markovianity measures.

The interest in considering PNM cores of known NNM evolutions resides in the possibility to isolate a dynamics with the same non-Markovian qualitative features and at the same time with the largest possible non-Markovian phenomena. For instance, in case of an experimental setup where the visible non-Markovian phenomena generated by a target evolution are not significant due to various additional noise sources in the laboratory (preparation, measurements, thermal noise, ecc...), the possibility to isolate and implement the corresponding PNM core may allow to appreciate the same non-Markovian phenomena that we failed to detect with the noisy version.

The first main goal of this work is to identify the initial useless noise of generic non-Markovian evolutions. While doing so, we propose a structure for the timing of the fundamental non-Markovian phenomena happening in finite and infinitesimal time intervals.
This framework provides a straightforward and natural approach to discriminate Markovian, NNM and PNM evolutions.
We follow by showing how to isolate the PNM core of a generic NNM evolution and, conversely, how any PNM evolution can generate a whole class of NNM evolutions. 
Later, we explain how and to what extent PNM evolutions are characterized by larger information backflows and non-Markovianity measures. In particular, we focus on backflows of state distinguishability. Finally, we show how the entanglement breaking property behaves in this scenario and we discuss a technique to activate correlation backflows that cannot be observed in presence of useless noise. Later we apply our results to several models, such as depolarizing and dephasing evolutions.

\section{Quantum evolutions}\label{evolutions}

We define $S(\mathcal{H})$ to be the set of density matrices of a generic $d$-dimensional Hilbert space $\mathcal{H}$. The time evolution of any open quantum system can be represented by a one-parameter family $\mL=\{\Lambda_t\}_{t\geq 0}$ of quantum maps, namely completely positive and trace preserving (CPTP) superoperators. 
We define $\mL$ to be the \textit{evolution} of the system, while $\Lambda_t$ is the corresponding \textit{dynamical map} at time $t$.
Hence, the transformation of an initial state $\rho(0)\in S(\mathcal{H})$ into the corresponding evolved state at time $t$ is $ \rho(t)=\Lambda_t(\rho(0))\in S(\mathcal{H})$.

We consider $\mL$ as a collection of dynamical maps \textit{continuous} in time. This is because any open quantum system evolution obtained through the physical interaction with an environment, even in case of non-continuous Hamiltonians, are continuous~\cite{continuity}. Secondly, we assume \textit{divisibility}, namely the existence of an \textit{intermediate map} for any time interval. More precisely, for all $0\leq s \leq t$ we assume the existence of a linear map $V_{t,s}$ such that $\Lambda_t=V_{t,s}\circ \Lambda_s$. Invertible evolutions are an instance of divisible evolutions. We call an evolution invertible if, for all $t\geq 0$, there exists the operator $\Lambda_t^{-1}$ such that $\Lambda_t^{-1}\circ\Lambda_t=I$, where $I$ is the identity map on $S(\mathcal{H})$.  Indeed, in these cases $V_{t,s}=\Lambda_t\circ \Lambda_s^{-1}$. While divisibility makes all the steps of the following sections easier, in Section~\ref{nondivisible} we show how to generalize our results to non-divisible evolutions.

We say that an evolution is \textit{CP-divisible} if and only if between any two times it is represented by a quantum channel. 
Hence, this property corresponds to require that the intermediate maps $V_{t,s}$ are CPTP for all $0\leq s\leq t$. Remember that any implementable quantum operation is represented by a CPTP operator, which is the reason why dynamical maps $\Lambda_t$ are required to be CPTP at all times. In case of non-CP-divisible evolutions, there exist $s\leq t$ such that $V_{t,s}$ is not CPTP, while at the same time $\Lambda_t=V_{t,s}\circ \Lambda_s$ must be CPTP. In this case the transformation acting in the time interval $[s,t]$ cannot be applied independently from the transformation applied in $[0,s]$. 

We define Markovian evolutions as those being CP-divisible. 
Thanks to the Stinespring-Kraus representation theorem~\cite{Stine,Kraus}, such a definition adheres with the impossibility of the system to recover any information that was previously lost. Indeed, as we better explain later, CPTP operators degrades the information content of the system.
Therefore, an evolution is non-Markovian if and only if there exists at least one time interval $[s,t]$ when the evolution is not described by a CPTP intermediate map. Indeed, whenever this is the case, it is possible to obtain an information backflow during the same time interval~\cite{bogna,DDSshort}, even for non-invertible evolutions~\cite{BD,DDSMJ}.
 
Given an evolution $\mL$, we define $\mathcal P^\Lambda$ to be the collection of time pairs such that the corresponding intermediate maps are CPTP. 
This set can be obtained by considering the smallest eigenvalue $\lambda_{t,s}$ of the state obtained by applying the  Choi-Jamiołkowski isomorphism  to $V_{t,s}$~\cite{Choi,Jamilokowski}:
\begin{eqnarray}\label{CPTPsub}
\!\!\!\!\!\!\!\!\!\!\! \mathcal{P}^\Lambda \!\!\!&=&\!\!\! \{\{s,t\} \,|\,\, 0\leq s\leq t \,\mbox{ and }\, V_{t,s}\, \mbox{ is CPTP} \} \\ \nonumber
\!\!\!&=&\!\!\! \{\{s,t\} \,|\,\, 0\leq s\leq t \,\mbox{ and }\, \lambda_{t,s}\geq 0 \}.
\end{eqnarray}
Indeed, $V_{t,s}$ is CPTP if and only if $\lambda_{t,s}\geq 0$. 
Similarly, we define the complementary set of $\mathcal P^\Lambda$ to be $\mathcal{N}^\Lambda$: the collection of time pairs such that $V_{t,s}$ is non-CPTP:
\begin{eqnarray}
\label{nonCPTPsub}
\!\!\!\!\!\mathcal{N}^\Lambda \!\!\!&=&\!\!\! \{\{s,t\} \,|\,\, 0\leq s\leq t \,\mbox{ and }\,V_{t,s} \mbox{ is not CPTP} \} \\
 \nonumber  \!\!\!&=&\!\!\! \{\{s,t\} \,|\,\, 0\leq s\leq t \,\mbox{ and }\, \lambda_{t,s}< 0 \}  .
\end{eqnarray}

An evolution $\mL$ is Markovian if and only if $\mathcal{N}^\Lambda$ is empty. The border of $\{s,t\}_{0\leq s\leq t} $ always belongs to $\mathcal{P}^\Lambda$: the vertical line $\{0,t\}_{t\geq 0}$ corresponds to the (CPTP) dynamical maps and the diagonal line $\{t,t\}_{t\geq 0}$ corresponds to the trivial intermediate (identity) maps.  The pairs infinitesimally close to $\{t,t\}_{t\geq 0}$ correspond to the infinitesimal intermediate maps $V_{t+\epsilon,t}$, where their CPTP/non-CPTP nature can be studied through the corresponding master equation rates~\cite{RHP,darekk} (see Section~\ref{eternality}).
Instead, any point in the interior of  $\{s,t\}_{0\leq s\leq t}$ can either belong to $\mathcal P^\Lambda$ or $\mathcal N^\Lambda$ but not every open set is allowed for $\mathcal N^\Lambda$: some constraints deriving from fundamental map composition rules have to be satisfied\footnote{Consider $V_{t_3,t_1}=V_{t_3,t_2}\circ V_{t_2,t_1}$ for $t_1< t_2< t_3$: 
if $V_{t_2,t_1}$ and $ V_{t_3,t_2}$ are CPTP, then  $V_{t_3,t_1}$ is CPTP; 
if $V_{t_3,t_1}$ is non-CPTP and $V_{t_2,t_1}$ is CPTP, then $V_{t_3,t_2}$ is non-CPTP;  if $V_{t_3,t_1}$ is non-CPTP and $V_{t_3,t_2}$ is CPTP, then $V_{t_2,t_1}$ is non-CPTP; if $V_{t_2,t_1}$ is non CPTP, there exists $t\in (t_1,t_2)$ such that $V_{t+\epsilon,t}$ is non-CPTP for infinitesimal $\epsilon>0$.}. Finally, we prove that $\mathcal{P}^\Lambda$ is closed and $\mathcal{N}^\Lambda$ is open in Appendix~\ref{proofclosed}. Below, we show several representations of $\mathcal P^\Lambda$ and $\mathcal N^\Lambda$.

\section{Timing of information backflows}

In this section we show how the timing of the main non-Markovian phenomena of an evolution are always ruled by three times: $\TL$, $\tau^\Lambda$ and $t^\Lambda$. We start by explaining their operational meaning.

It is possible to observe an information backflows during a time interval if and only if it starts later than $\TL$. Hence, there exist intervals $[\TL+\epsilon, t]$ when the corresponding intermediate maps are not CPTP for infinitesimal $\epsilon>0$\footnote{When we say that a feature is satisfied for an ``infinitesimal $\epsilon$'', we mean that for a given $\epsilon^*>0$ this feature holds true for all $\epsilon \in (0,\epsilon^*)$.}. Among these time intervals, $[\TL+\epsilon, t^\Lambda]$ is the shortest, namely $t=t^\Lambda$ is the earliest final time such that $V_{t^\Lambda, \TL+\epsilon}$ is not CPTP for infinitesimal $\epsilon>0$. Instead, $\tau^\Lambda$ is the earliest time when an instantaneous backflow can be observed, namely the earliest $t$ such that $V_{t+\epsilon,t}$ is not CPTP for infinitesimal $\epsilon>0$. Hence,  $[\TL+\epsilon, t^\Lambda]$ ($[\tau^\Lambda,\tau^\Lambda+\epsilon]$) is the shortest time interval with the earliest \textit{initial} (\textit{final}) time when the corresponding intermediate map is not CPTP. 
For these reasons, we call the noise applied by the evolution in $[0,\TL]$ as {useless} for non-Markovianity, while the noise applied later than $\TL$ is {essential} for non-Markovian phenomena.
We represent the typical role of these three times in Fig. \ref{disegnetto}.

\begin{figure}
\includegraphics[width=0.45\textwidth]{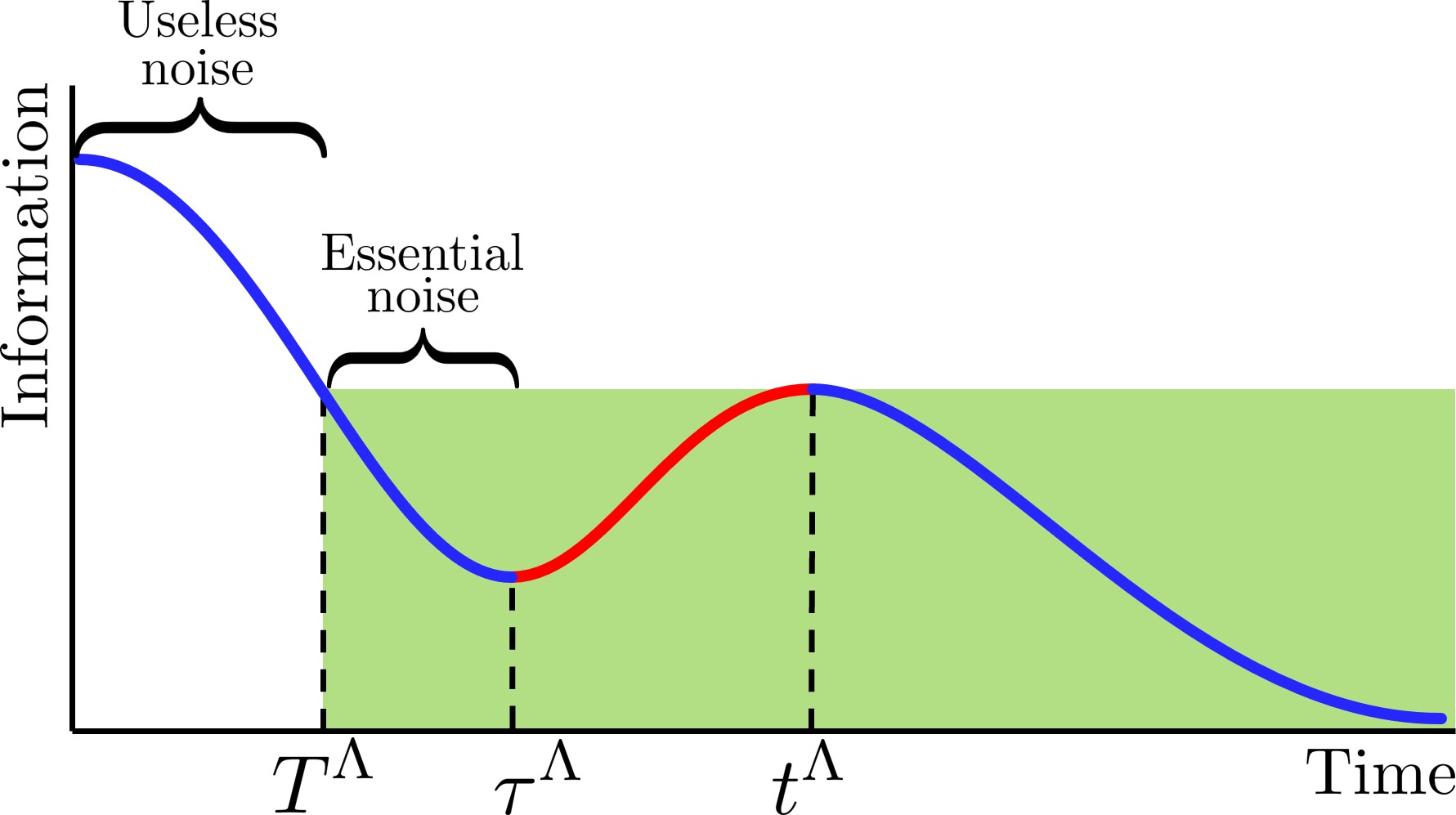}
\caption{Typical information content of an open quantum system evolving under a NNM evolution. An increase, or backflow, of information, is a typical sign of non-Markovianity, namely of non-CPTP intermediate maps. Blue/red regions represent times when the infinitesimal intermediate map $V_{t+\epsilon,t}$ is CPTP/non-CPTP. The time $T^\Lambda$ is the largest such that the preceding dynamics is CP-divisible and $V_{t,T^\Lambda}$ is CPTP for all $t\geq T^\Lambda$. Indeed, for $t\geq \TL$, the information content of the system never exceeds the level at $\TL$ (green area). The information lost in $[0,T^\Lambda]$ is never recovered (useless noise), while the noise applied in $[T^\Lambda,\tau^\Lambda]$ is essential for the following backflows. $\tau^\Lambda$ is the earliest time after which we have an instantaneous backflow. We have finite backflows in intervals $[s,t]$ with $s>\TL$ and $t^\Lambda$ is the earliest $t$ such that we have a  backflow in $[\TL+\epsilon,t]$.
\label{disegnetto} }
\end{figure}

We follow by giving the mathematical definitions of $T^\Lambda$, $\tau^\Lambda$ and $t^\Lambda$. Given a generic evolution $\mL$, we define:
\begin{equation}\label{TLambdanuovo}
\!\!\! T^{\Lambda}= \max \left\{\, T\, \left|   
\begin{array}{ccc}
\!\mbox{(A)} &\!\! V_{t,s} \,\,\mbox{ CPTP for all} &\!\!\!  s\leq t \leq T\\ 
\!\mbox{(B)} &\!\! V_{t,T} \,\,\mbox{ CPTP for all} &\!\!\! T\leq t  \\ 
\!\mbox{(C)} &\!\! \Lambda_T \,\, \mbox{ not unitary} &\!\!\! { T>0}
\end{array}
\right. \!\!\! \right\}  .
\end{equation}
We briefly discuss conditions~(A), (B) and (C). Condition (A) requires the evolution to be CP-divisible before $\TL$: \textit{no non-Markovian effects can take place in $[0,\TL]$}. Condition (B) requires the evolution following $T^\Lambda$ to be physical \textit{by its own}, namely such that the composition with the initial noise $\Lambda_{T^\Lambda}$ is not needed for the intermediate maps $\{V_{t,T^\Lambda}\}_{t\geq T^\Lambda}$ to be CPTP. Finally, condition~(C) is imposed because a unitary transformation is not detrimental for the information content of our system and we cannot consider it useless noise: it is ``useless'' (for non-Markovian phenomena) but not noisy. We remember that evolutions with dynamical maps that are unitary at all times can be simulated with closed quantum systems, and therefore we do not focus on these cases or where condition (C) is necessary.
 In Appendix \ref{proofclosed} we show that Eq.~(\ref{TLambdanuovo}) is indeed a maximum and not a supremum. 

An evolution $\mL$ is Markovian if and only if $T^\Lambda=+\infty$. Indeed, all the noise applied by the evolution is not necessary for the later evolution to be physical. Markovian evolutions can be interpreted as sequences of noisy independent operations. Indeed, the dynamics between any two times the evolution is represented by a (noisy) CPTP map.
Below we show that a finite value of $\TL\geq 0$ implies the evolution to have non-CPTP intermediate maps and therefore to be non-Markovian. From now on, by $T^\Lambda\geq 0$ we mean $T^\Lambda\in [0,\infty)$. Hence, the time $\TL$ can be used to classify  quantum evolutions as follows:
\begin{itemize}
\item   Markovian: $T^\Lambda=+\infty$;
\item  Noisy non-Markovian (NNM): $T^\Lambda \in (0,+\infty)$; 
\item  Pure non-Markovian (PNM): $T^\Lambda=0$. 
\end{itemize}
Finally, the following results clarify the role of 
  $\TL$ (proofs in Appendix \ref{lemma123}):
\begin{lemma}\label{ABviolation}
Conditions (A), (B) and (C) are simultaneously satisfied at time $T$ if and only if $T\in [0,\TL]$. If (A) is violated at time $T$, (B) is violated at a strictly earlier time. 
\end{lemma}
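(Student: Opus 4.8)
Here is how I would approach the proof.

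The plan is to treat the two sentences of the lemma separately. For the first, one direction needs no work: if (A), (B) and (C) hold at $T$ then $T$ belongs to the set of which $\TL$ is the maximum, so $0<T\le\TL$. For the reverse inclusion I would show that the set of times at which (A), (B) and (C) hold simultaneously is an interval of the form $(0,\TL]$ — equivalently $[0,\TL]$ once the trivial endpoint $T=0$ is adjoined, where the only intermediate map to inspect is $V_{0,0}=I$. The second sentence I would prove directly, by exhibiting the earlier time at which (B) must fail.

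First I would collapse (A) and (B) into a single monotone condition, the one already flagged in the footnote: (A)$\wedge$(B) is equivalent to (A+B), namely ``$V_{t,s}$ is CPTP for all $t\ge s$ with $s\in[0,T]$''. Indeed (A+B) gives both (A) and (B) by restriction, while conversely, for $t>T\ge s$ the composition law $V_{t,s}=V_{t,T}\circ V_{T,s}$ writes $V_{t,s}$ as a composition of two maps that are CPTP by (B) and (A) respectively, hence CPTP. Now (A+B) is manifestly monotone in $T$: lowering $T$ only shrinks the region $\{(s,t)\,:\,0\le s\le T,\; s\le t\}$ required to lie inside $\mathcal{P}^\Lambda$. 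So $\{T:(A)\wedge(B)\}$ is an interval containing $0$, and it is closed because $\mathcal{P}^\Lambda$ is closed (Appendix \ref{proofclosed}); write it as $[0,T_{AB}]$.

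Next I would bring in (C). The key point is that unitarity of a dynamical map propagates backwards along a CP-divisible stretch: if $V_{t,s}$ is CPTP for all $s\le t\le T$ and $\Lambda_T$ is unitary, then $\Lambda_s$ is unitary for every $s\le T$. This follows from $\Lambda_T=V_{T,s}\circ\Lambda_s$ together with the elementary fact that if a composition of two CPTP maps is unitary then each factor is unitary (a CPTP map admitting a CPTP left inverse must be unitary, by the Stinespring--Kraus structure theorem). Hence, inside $[0,T_{AB}]$, the set of times at which $\Lambda_t$ is unitary is downward closed; setting aside — as in the main text — the trivial dynamics that are unitary at all times, this set is $\{0\}$, so $\Lambda_t$ is non-unitary for every $t\in(0,T_{AB}]$. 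Consequently (A)$\wedge$(B)$\wedge$(C) holds precisely on $(0,T_{AB}]$, which gives in particular $\TL=T_{AB}$ (and that this value is attained), proving the first sentence.

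For the second sentence, assume (A) is violated at $T$: there is a pair with $s\le t\le T$ and $V_{t,s}$ not CPTP. Since $V_{t,t}=I$ and $V_{t,0}=\Lambda_t$ are always CPTP, necessarily $0<s<t\le T$; taking $T'=s$ we get $T'<T$ and $V_{t,T'}=V_{t,s}$ not CPTP with $t\ge T'$, i.e.\ (B) fails at the strictly earlier time $T'$. The only genuinely non-routine ingredient in all of this is the backward propagation of unitarity, which underlies the treatment of (C) and the attainment of the maximum; everything else reduces to the composition law for intermediate maps and the closedness of $\mathcal{P}^\Lambda$ established in Appendix \ref{proofclosed}.
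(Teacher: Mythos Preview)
Your argument is correct and follows essentially the same route as the paper's. The paper verifies directly that (A) and (B) pass from $\TL$ down to any $T\le\TL$ using the composition $V_{t,T}=V_{t,\TL}\circ V_{\TL,T}$; your (A+B) reformulation together with its monotonicity in $T$ is the same idea phrased slightly more abstractly, and your proof of the second sentence is identical to the paper's. You go further than the paper by treating condition (C) explicitly via backward propagation of unitarity along a CP-divisible stretch --- the paper's own proof does not mention (C) at all. One small caveat there: downward-closedness of the set of unitary times does not by itself force that set to be $\{0\}$; an initial unitary interval $[0,t_U]$ followed by genuine Markovian noise before the PNM core is a possibility, and in that edge case the lemma's ``if and only if'' would require the same tacit qualification that both you and the paper leave implicit.
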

\begin{lemma}\label{lemma4}
Any non-Markovian evolution $\mL$ has non-CPTP intermediate maps $V_{T,\TL+\epsilon}$ for one or more final times $T>\TL$ and infinitesimal values of $\epsilon >0$. If $V_{t,s}$ is non-CPTP, then $\TL<s$.
\end{lemma}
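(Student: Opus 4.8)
The plan is to prove the two assertions separately, the second being a short consequence of Lemma~\ref{ABviolation} and the first requiring a little work around the maximality built into Eq.~(\ref{TLambdanuovo}).

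For the second sentence, assume $V_{t,s}$ is non-CPTP and suppose, for contradiction, that $s\le\TL$. By Lemma~\ref{ABviolation} applied at $T=\TL$, conditions (A) and (B) both hold at $\TL$, and by the footnote after Eq.~(\ref{TLambdanuovo}) this is equivalent to saying that $V_{t',s'}$ is CPTP for every $s'\le t'$ with $s'\in[0,\TL]$. Since $0\le s\le t$ and $s\le\TL$, this makes $V_{t,s}$ CPTP, a contradiction; hence $\TL<s$. I expect no difficulty here.

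For the first sentence, start from the fact that a non-Markovian $\mL$ has $\TL<+\infty$, so that no $T>\TL$ belongs to the set in Eq.~(\ref{TLambdanuovo}). Suppose, towards a contradiction, that there is $\delta>0$ such that no intermediate map $V_{t,s}$ with initial time $s\in(\TL,\TL+\delta)$ is non-CPTP. Then for every $T'\in(\TL,\TL+\delta)$ condition (B) holds at $T'$; condition (A) holds at $T'$ as well, since a non-CPTP $V_{t,s}$ with $s\le t\le T'$ would have $\TL<s\le T'<\TL+\delta$ by the second sentence just proved, which is excluded by our assumption. As $T'$ cannot lie in the set of Eq.~(\ref{TLambdanuovo}), condition (C) must fail throughout $(\TL,\TL+\delta)$, i.e.\ $\Lambda_{T'}$ is unitary on the whole interval. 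But then (C) is the sole obstruction keeping $\TL$ from being at least $\TL+\delta$ --- precisely the degenerate, effectively closed-system regime the paper sets aside when it requires that condition (C) not be the binding constraint --- and, moreover, continuity of $\mL$ together with closedness of the set of unitary channels forces $\Lambda_{\TL}$ to be unitary, contradicting (C) at $T=\TL$ when $\TL>0$ (the case $\TL=0$ being covered by the same standing convention). Hence for every $\delta>0$ there is a non-CPTP intermediate map $V_{T,\TL+\epsilon}$ with $\epsilon\in(0,\delta)$ and some $T>\TL$. To state this for a full right-neighbourhood $\epsilon\in(0,\epsilon^*)$ I would invoke that $\mathcal N^\Lambda$ is open (Appendix~\ref{proofclosed}): each such non-CPTP pair sits in an open box inside $\mathcal N^\Lambda$, and combining this with the map-composition constraints listed earlier (in particular item (iv) of the footnote) and, once again, the maximality of $\TL$ to rule out a CP-divisible gap just above $\TL$, one obtains an interval of values $\epsilon$ on which $V_{T,\TL+\epsilon}$ remains non-CPTP.

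The step I expect to be the genuine obstacle is exactly the exclusion of a CP-divisible buffer immediately above $\TL$: as shown above this would force $\Lambda_t$ to be unitary on a whole interval, and dismissing that possibility has to go through condition (C) at $\TL$ (for $\TL>0$) or through the paper's standing convention on (C) (for $\TL=0$); making this dichotomy airtight, and in particular controlling the continuity of the unitary locus of $\mL$, is where the care lies, whereas converting a violation of (A) into a strictly earlier violation of (B) --- an alternative way to organise the same argument --- is supplied gratis by the second part of Lemma~\ref{ABviolation}.
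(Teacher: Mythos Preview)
Your proof is correct and follows essentially the same logic as the paper's, which is terser: for the first sentence it invokes Lemma~\ref{ABviolation} directly to conclude that condition (B) must fail at $\TL+\epsilon$ (since a failure of (A) would force (B) to fail at a strictly earlier time), and for the second sentence it decomposes $V_{t,s}=V_{t,\TL}\circ V_{\TL,s}$ and observes that both factors are CPTP by (A) and (B) at $\TL$. Your explicit handling of condition (C) via continuity and of the full right-neighbourhood via the openness of $\mathcal N^\Lambda$ are careful refinements that the paper leaves implicit under its standing convention on (C).
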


We follow by defining $\tau^\Lambda$ as the time when information begins to be instantaneously retrieved from the environment. Hence, it is defined by the earliest time when  $V_{T+\epsilon,T}$ is non-CPTP for infinitesimal $\epsilon>0$:
\begin{equation}\label{taunm}
\tau^{\Lambda}=\lim_{\epsilon\rightarrow 0^+} {\inf} \left\{ T \,|  \, V_{T+\epsilon,T} \mbox{ not CPTP}  \right\} .
\end{equation}

The time intervals with the earliest initial time such that the corresponding intermediate maps are non-CPTP are of the form $[\TL+\epsilon,t]$ (see Lemma~\ref{lemma4}). We define $t^{\Lambda}$ to be the earliest final time $t$ such that the corresponding intermediate map is non-CPTP:
\begin{equation}\label{tnm}
t^{\Lambda}=\lim_{\delta\rightarrow 0^+} \inf \left\{ T \,|  \, V_{T,\TL+\delta} \mbox{ not CPTP}  \right\}.
\end{equation}

The timing of the earliest information backflows is therefore dictated by $\TL$, $\tau^\Lambda$ and $t^\Lambda$, which have definite values for all non-Markovian evolutions.
 These three characteristic times satisfy the following reciprocal relation  (proof in Appendix \ref{tlambdavari}):
\begin{equation}\label{order}
0\leq \TL \leq \tau^\Lambda \leq t^\Lambda \leq \infty  .
\end{equation} 
We briefly discuss the possible equalities that can hold in the above equation.
$\TL=0$ corresponds to PNM evolutions, which are largely analysed throughout this work. A divergent $t^\Lambda=\infty$ is allowed only if $ \TL < \tau^\Lambda < t^\Lambda$.  For what concerns the possible equalities
between $\TL$, $\tau^\Lambda$ and $ t^\Lambda$, we have that $\TL= \tau^\Lambda$ implies $\TL= \tau^\Lambda = t^\Lambda$ (proof in Appendix \ref{tlambdavari}), namely 
$\TL = \tau^\Lambda < t^\Lambda$ is forbidden. 
Instead, $\TL< \tau^\Lambda = t^\Lambda$ is allowed.

We mention some examples for the above-mentioned patterns of $\TL$, $\tau^\Lambda$ and $t^\Lambda$. Concerning the difference between $\TL=0$ and $\TL>0$, we show how to obtain PNM evolutions ($\TL=0$) from NNM evolutions ($\TL>0$) and \begin{widetext}

\begin{figure}
\,\,\,\, \includegraphics[width=0.95\textwidth]{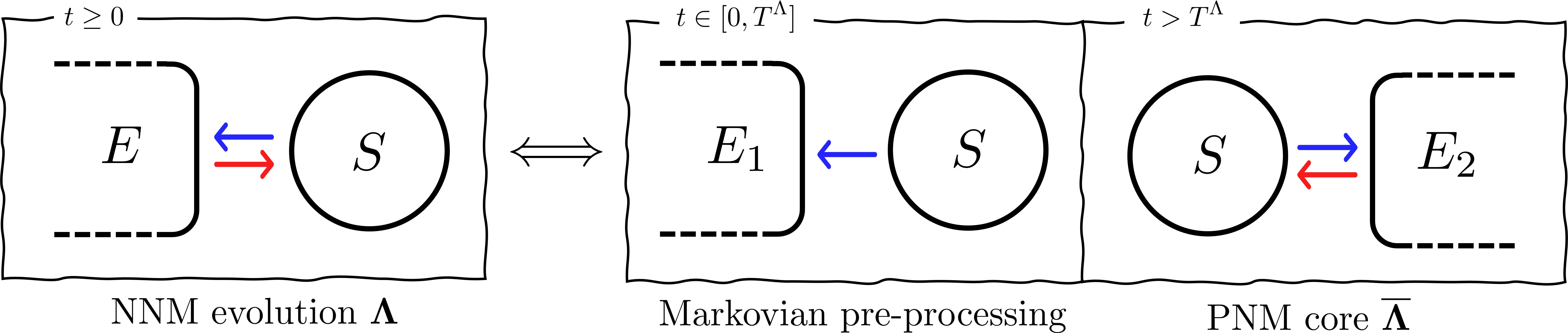}
\caption{Open quantum systems are physically represented by a system $S$ interacting with a surrounding environment $E$. This interaction leads to information losses (blue arrows) and, in case of non-Markovian evolutions, backflows (red arrows).
NNM evolutions $\mL$ can be simulated with the following two-stage scenario. First stage ($t\in[0,\TL]$): the system interacts with a first environment $E_1$ and information is lost monotonically (Markovian pre-processing).  The dynamics during this first stage corresponds to the useless noise of $\mL$.  
Second stage ($t>\TL$): $E_1$ is discarded, the system evolves while interacting with $E_2$ and we have information backflows. The dynamics during this second stage corresponds to the PNM core $\omL$ of $\mL$.  
}\label{resume}
\end{figure}
\end{widetext}
vice-versa in Section~\ref{secPNM}. 
We propose and in-depth study of $\TL < \tau^\Lambda < t^\Lambda < \infty$ for depolarizing evolutions in Section~\ref{secdepo}.
A well-known instance of $0=\TL= \tau^\Lambda = t^\Lambda $ is given by the eternal non-Markovian model~\cite{eternal0,eternal,eternal1}, while $\TL< \tau^\Lambda < t^\Lambda=\infty$ and $\TL< \tau^\Lambda = t^\Lambda<\infty$ are obtained with quasi-eternal non-Markovian evolutions~\cite{DDSlong}  in Section~\ref{eternality}.

The following result generalizes Lemma~\ref{lemma4}:
\begin{proposition}\label{propDgen}
Let $\mL$ be a generic non-Markovian evolution. If  $\TL<t^\Lambda$, then $  V_{t^\Lambda,s}$ is non-CPTP for all $s\in(\TL,t^\Lambda)$. If $\TL=t^\Lambda$, for all $T>\TL$ the infinitesimal intermediate map $V_{t+\epsilon,t}$ is non-CPTP for an infinite number of times $t\in(\TL,T)$.
\end{proposition}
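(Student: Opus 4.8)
The plan is to prove both statements by contradiction, exploiting the composition rules for intermediate maps (the footnote following Eq.~(\ref{nonCPTPsub})) together with the characterization of $t^\Lambda$ as an infimum (Eq.~(\ref{tnm})) and the closedness of $\mathcal P^\Lambda$ / openness of $\mathcal N^\Lambda$ established in Appendix~\ref{proofclosed}. For the first claim, assume $\TL<t^\Lambda$ and suppose, for contradiction, that $V_{t^\Lambda,s_0}$ is CPTP for some $s_0\in(\TL,t^\Lambda)$. By Lemma~\ref{lemma4}, for every infinitesimal $\delta>0$ the map $V_{t^\Lambda,\TL+\delta}$ is non-CPTP (this is essentially how $t^\Lambda$ is defined as the earliest such final time); pick $\delta$ small enough that $\TL+\delta<s_0$. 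Now write $V_{t^\Lambda,\TL+\delta}=V_{t^\Lambda,s_0}\circ V_{s_0,\TL+\delta}$. Condition (A+B) in the definition of $\TL$ (see the footnote: $V_{t,s}$ is CPTP for all $s\leq t$ with $s\in[0,\TL]$) does \emph{not} directly cover $V_{s_0,\TL+\delta}$ since $\TL+\delta>\TL$, so instead I invoke the minimality of $t^\Lambda$: since $s_0<t^\Lambda$ and $s_0$ plays the role of a final time for the interval starting at $\TL+\delta$, the map $V_{s_0,\TL+\delta}$ must be CPTP (otherwise $s_0$ would be an admissible value strictly smaller than the infimum $t^\Lambda$, for that particular $\delta$; one has to phrase this carefully with the $\lim_{\delta\to0^+}\inf$ but the argument goes through because we can then shrink $\delta$ further). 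Then $V_{t^\Lambda,\TL+\delta}$ is a composition of two CPTP maps, hence CPTP — contradiction. Therefore $V_{t^\Lambda,s}$ is non-CPTP for all $s\in(\TL,t^\Lambda)$.

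For the second claim, assume $\TL=t^\Lambda$ and fix an arbitrary $T>\TL$. By Lemma~\ref{lemma4}, there is a non-CPTP intermediate map $V_{t_*,\TL+\epsilon}$ with $t_*>\TL$, and since $t^\Lambda=\TL$ we may take $t_*$ arbitrarily close to $\TL$, hence $t_*<T$ after shrinking $\epsilon$. So we have a non-CPTP map $V_{t_*,s_*}$ with $\TL<s_*<t_*<T$. Now apply part (iv) of the composition-rules footnote: if $V_{t_*,s_*}$ is non-CPTP, there exists $t\in(s_*,t_*)$ with $V_{t+\epsilon,t}$ non-CPTP for infinitesimal $\epsilon>0$, and this $t$ lies in $(\TL,T)$. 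To upgrade ``there exists one such $t$'' to ``infinitely many such $t$'', I would iterate: the set of times $t\in(\TL,T)$ at which $V_{t+\epsilon,t}$ is CPTP cannot contain an interval $(a,b)\subseteq(\TL,T)$, because on such an interval CP-divisibility would give $V_{b,a}$ CPTP, and a covering/compactness argument (splitting $[s_*,t_*]$ into finitely many pieces, with at least one piece non-CP-divisible at every scale) forces the non-CP-divisible times to accumulate. More cleanly: suppose only finitely many times $t_1<\dots<t_n$ in $(\TL,T)$ have $V_{t+\epsilon,t}$ non-CPTP; pick any $t_*,s_*$ with $\TL<s_*<t_*<T$ avoiding all $t_i$ in its interior — possible if they are finite — then by (iv) applied to the non-CPTP $V_{t_*,s_*}$ (which exists, again by taking $t^\Lambda=\TL$) we get a new non-CP-divisible time in $(s_*,t_*)$ distinct from all $t_i$, contradiction.

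I expect the main obstacle to be the careful handling of the $\lim_{\delta\to0^+}\inf\{\cdot\}$ in the definition of $t^\Lambda$: the statement ``$V_{t^\Lambda,s}$ is non-CPTP'' at the \emph{endpoint} $s$ near $\TL$ is really a limiting statement, and one must be sure that the infimum is attained (which is where closedness of $\mathcal P^\Lambda$ and the fact that $\{s,t\}$-pairs near the diagonal behave well via master-equation rates, cf.\ the discussion before Eq.~(\ref{CPTPsub}), come in) rather than merely approached. In the contradiction for the first claim one must ensure that fixing $s_0$ and then shrinking $\delta$ does not run afoul of the order of the limit and the infimum; I would resolve this by working with a fixed small $\delta^*$ for which $V_{t,\TL+\delta}$ non-CPTP at $t$ arbitrarily close to $t^\Lambda$ for all $\delta\in(0,\delta^*)$, and noting that $V_{t,s_0}$ CPTP for all $t$ in a neighborhood of $t^\Lambda$ would (by closedness of $\mathcal P^\Lambda$ and continuity of $\lambda_{t,s}$) also make $V_{t^\Lambda,s_0}$ CPTP, then composing. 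The second claim's ``infinitely many'' is mostly bookkeeping once (iv) is in hand, but the covering argument should be stated explicitly rather than waved at.
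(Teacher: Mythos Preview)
Your proposal is correct and matches the paper's argument closely: for the first claim you use the same decomposition $V_{t^\Lambda,\TL+\delta}=V_{t^\Lambda,s}\circ V_{s,\TL+\delta}$ together with the minimality of $t^\Lambda$ to force the second factor CPTP (the paper phrases this directly rather than by contradiction, but the content is identical), while for the second claim you invoke the footnote's rule (iv) where the paper instead unpacks it via an explicit infinitesimal decomposition of a non-CPTP $V_{t_2,t_1}$ inside $(\TL,T)$ and reduces ``finitely many'' to ``zero'' by shrinking $T$. Your concerns about the $\lim_{\delta\to0^+}\inf$ in the definition of $t^\Lambda$ are legitimate, but the paper does not address them either and simply treats $t^\Lambda$ as an attained minimum.
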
 
Hence, not only each non-Markovian evolution must have a non-CPTP intermediate map for time intervals starting immediately after $\TL$ (Lemma~\ref{lemma4}), but whenever $\TL<t^\Lambda$ there is a whole continuum of non-CPTP intermediate maps $V_{t^\Lambda,s}$ for $s\in(\TL,t^\Lambda)$. Additionally, if  $V_{t,\TL+\epsilon}$ is not CPTP  for $t>t^\Lambda$, all the intermediate maps $V_{t,s}$ with $s\in(\TL,t^\Lambda)$ are non-CPTP.

In case of $\TL=t^\Lambda$, the infinitesimal intermediate maps $V_{t+\epsilon,t}$ are non-CPTP either for all the times inside a time interval of the type $(\TL, T)$ for some $T>\TL$ or for infinite times that do not constitute an interval for any $T>\TL$. We propose an example of the latter pathological case in Appendix~\ref{lemma123}. A special case is given by the eternal NM model, which has non-CPTP intermediate maps $V_{t,s}$ \textit{for all} $0<s<t$  (see Section~\ref{eternality}): $\TL=t^\Lambda$ and it enjoys both properties described in Proposition~\ref{propDgen}.

\section{Pure non-Markovian evolutions}\label{secPNM}

We show that the initial noise that NNM evolutions apply in the time interval $[0,T^\Lambda]$ is useless for the following non-Markovian effects to happen.
By doing so, we prove that any NNM evolution can be simulated by a Markovian pre-processing of the input states followed by a PNM evolution. Finally, we better explain the role of PNM evolutions and we show that, if an evolution perfectly retrieves the initial information of the system, then it is PNM.
 
\subsection{Simulation of NNM evolutions with PNM evolutions}\label{simulation}

We start by simulating NNM evolutions $\mathbf{\Lambda}$ with the subsequent interaction of the system with two different environments. We consider the Stinespring-Kraus representation theorem~\cite{Stine,Kraus}, which allows us to describe a continuous family of CPTP maps through the interaction of the system with an initially uncorrelated environment. Hence, we consider the system in contact with a first environment $E_1$ in the time interval $[0,\TL]$ and at later times $t>\TL$ in contact with a different environment $E_2$. Thus, consider the following two-step scenario:
\begin{itemize}
\item $t\in[0,T^\Lambda]$ ({\bf Markovian pre-processing}):  the evolution is simulated by the interaction with $E_1$. A unitary transformation $U'_t$ evolves the system-environment state, which at time $0$ is in a product state (no initial system-environment correlations):
\begin{equation}\label{E1}
\rho_{S}(0)\rightarrow \rho_S(t)=\mbox{Tr}_{E_1}[U'_{t} (\rho_S(0)\otimes \omega_{E_1})U'_{t}  ].
\end{equation}
This simulation is possible because $\Lambda_t$ is CPTP for all $t\in [0,\TL]$. The phenomenology during this time interval is Markovian as $\mL$ is CP-divisible (see Eq.~(\ref{TLambdanuovo})). This stage represents the useless noise of $\mL$. 
\item $t=T^\Lambda$: we discard $E_1$ and let the system interact with $E_2$. The system-environment state is given by $\rho_{SE_2}(\TL)=\rho_S(\TL)\otimes \omega_{E_2}$ (no initial system-environment correlations). 
\item $t\geq T^\Lambda$ ({\bf PNM core}): the evolution is simulated by the interaction with $E_2$. A unitary transformation $U''_\tau$ evolves the system-environment state:
\begin{equation}\label{E2}
\rho_{S}(T^\Lambda)\rightarrow \rho_S(t)=\mbox{Tr}_{E_2}[U''_{\tau} (\rho_S(T^\Lambda)\otimes \omega_{E_2})U''_{\tau}  ]  \, ,
\end{equation}
where $\tau=t-T^\Lambda\geq 0$. This simulation is possible because $V_{t,\TL}$ is CPTP for all $t\geq \TL$ (see Eq.~(\ref{TLambdanuovo})).
 The phenomenology during this time interval is NM. 
\end{itemize}

As we already noticed, no information backflow can be observed in $[0,T^\Lambda]$: the phenomenology \textit{in this time interval} is Markovian. 
Now, thanks to this two-stage simulation, we can state that the information involved in the backflows was originally lost later than $T^\Lambda$. Indeed, the non-Markovian effects of $\mathbf{\Lambda}$ do not depend on the behaviour of the dynamics in the time interval $[0,T^\Lambda]$, when $\mL$ is CP-divisible.
This is the reason why we call the first stage a \textit{Markovian pre-processing} and we say that $\Lambda_t$, for $t\in [0,\TL]$, generates the \textit{useless noise} of the evolution. 

Conversely, the (CPTP) intermediate maps $V_{t,s}$ for $\TL\leq s\leq t$ generate the \textit{essential noise} needed for non-Markovian phenomena.
We define $\overline{\mathbf{\Lambda}}=\{\overline{{\Lambda}}_\tau\}_{\tau\geq 0}$ to be the evolution that represents the interaction with the second environment, where:
\begin{equation}\label{E22}
\overline{{\Lambda}}_\tau (\cdot) = \mbox{Tr}_{E_2}[U''_{\tau} (\,\cdot \,\otimes \omega_{E_2} )U''_{\tau}  ] \, .
\end{equation}
 From the above definitions it is easy to see that the dynamical and intermediate maps of $\omL$ are connected with the intermediate maps of $\mathbf{\Lambda}$ as follows:
 \begin{equation}\label{PNM}
\begin{array}{ccc}
& &\,\,\,\,\, \overline{{\Lambda}}_{t}  = V_{t+\TL,\TL}\hspace{0.4cm}   \\ 
& & \overline{V}_{t,s} = V_{t+\TL,s+\TL}
\end{array}  \,\,\,\, \mbox{ for } \,\,\,\,  0 \leq s \leq t \, ,
\end{equation}
where $\overline V_{t,s}$ is the intermediate map of $\omL$.
The map $\overline{{\Lambda}}_{t} $ is CPTP for all $t \geq 0$ (see Eq.~(\ref{TLambdanuovo})) and therefore $\overline{\mathbf{\Lambda}}$ is a valid evolution by itself. It is straightforward to check that  $T^{\overline{\Lambda}}=0$: $\overline{\mathbf{\Lambda}}$ is PNM and we call it the \textit{PNM core of} $\mathbf{\Lambda}$.
Finally, the relation between the characteristic times of $\mL$ and $\omL$ is (see  Eqs.~(\ref{taunm}), (\ref{tnm}) and (\ref{PNM})):
\begin{equation}\label{tLoL}
T^{\overline{\Lambda}}=0 \,,\,\,\,\, \tau^{\overline{\Lambda}}=\tau^\Lambda-\TL \, , \,\,\,\, t^{\overline{\Lambda}}=t^\Lambda-\TL \, .
\end{equation}

We conclude that NNM evolutions can be simulated via a {Markovian pre-processing} (physically represented by Eq.~(\ref{E1})) followed by the action of the corresponding PNM core (physically represented by Eq.~(\ref{E22})):
\begin{equation}\label{NNMevo2}
 \mathbf{\Lambda}= \left\{
\begin{array}{ccc}
 \Lambda_t & t< T^{\Lambda} & \mbox{ (Markovian pre-proc.)} \\
 \overline{\Lambda}_{t-T^\Lambda} \circ \Lambda_{T^\Lambda} & t\geq  T^{\Lambda}  & \mbox{ (PNM core)}
\end{array} \right.  
\end{equation}
This decomposition is depicted in Fig. \ref{resume}.
Naturally, different Markovian pre-processings of the same PNM core provide different NNM evolutions. We summarize the proven relations between NNM and PNM evolutions as follows:
\begin{proposition}\label{PNMNNM}
 Any NNM evolution can be written as a Markovian pre-processing of a PNM evolution.
 Any PNM evolution  defines a class of NNM evolutions given by all its possible  Markovian pre-processings.
\end{proposition}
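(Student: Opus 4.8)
The plan is to recognise that the proposition simply packages the construction of Eqs.~(\ref{PNM})--(\ref{NNMevo3}), so the work is to certify that each object built there has the advertised type. For the first claim, I would take $\mL$ NNM, hence $\TL\in(0,\infty)$, and use Lemma~\ref{ABviolation} to get that conditions (A), (B), (C) of Eq.~(\ref{TLambdanuovo}) all hold at $T=\TL$; in particular (B) says $V_{t,\TL}$ is CPTP for every $t\ge\TL$. I would then set $\omL=\{\overline{\Lambda}_\tau\}_{\tau\ge 0}$ with $\overline{\Lambda}_\tau=V_{\tau+\TL,\TL}$ and $\overline{V}_{t,s}=V_{t+\TL,s+\TL}$, as in Eq.~(\ref{PNM}), and verify that this is a legitimate evolution: each $\overline{\Lambda}_\tau$ is CPTP by (B); continuity in $\tau$ descends from the continuity of $\mL$; and the composition law $V_{t+\TL,\TL}=V_{t+\TL,s+\TL}\circ V_{s+\TL,\TL}$ for $0\le s\le t$ shows that $\overline{V}_{t,s}$ is a genuine intermediate map, $\overline{\Lambda}_t=\overline{V}_{t,s}\circ\overline{\Lambda}_s$. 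I would also note that $\omL$ is non-Markovian: Lemma~\ref{lemma4} produces a non-CPTP $V_{T,\TL+\epsilon}$ with $T>\TL$ and infinitesimal $\epsilon>0$, which is precisely the non-CPTP intermediate map $\overline{V}_{T-\TL,\epsilon}$ of $\omL$, so $T^{\overline{\Lambda}}<\infty$.

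The substantive step is to show $T^{\overline{\Lambda}}=0$, i.e.\ that no residual useless noise survives in the core. I would argue by contradiction: if $T^{\overline{\Lambda}}\ge T'>0$, then applying to $\omL$ the form of (A)+(B) that reads ``$V_{t,s}$ CPTP whenever $s\le t$ and $s\in[0,T']$'' and translating the time labels by $\TL$, I obtain that $V_{t,s}$ is CPTP for all $s\le t$ with $s\in[\TL,\TL+T']$; combined with condition (A) for $\mL$ this gives $V_{t,s}$ CPTP for all $s\le t$, $s\in[0,\TL+T']$, so $\mL$ obeys (A)+(B) at $\TL+T'$. Since $\Lambda_{\TL+T'}=\overline{\Lambda}_{T'}\circ\Lambda_{\TL}$ is a composition of CPTP maps of which $\Lambda_{\TL}$ is non-unitary by (C) at $\TL$, the map $\Lambda_{\TL+T'}$ is non-unitary, so (C) holds at $\TL+T'$ as well --- contradicting the maximality of $\TL$ in Eq.~(\ref{TLambdanuovo}). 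Hence $\omL$ is PNM. The decomposition~(\ref{NNMevo2}) then follows from the identity $\Lambda_t=V_{t,\TL}\circ\Lambda_{\TL}=\overline{\Lambda}_{t-\TL}\circ\Lambda_{\TL}$ for $t\ge\TL$ together with the fact that, by (A), $\{\Lambda_t\}_{t\in[0,\TL]}$ is CP-divisible, i.e.\ a Markovian pre-processing.

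For the second claim I would fix any PNM evolution $\omL$ and any CP-divisible continuous family $\{\Gamma_t\}_{t\in[0,T^\Gamma]}$ with $T^\Gamma>0$ whose final map $\Gamma_{T^\Gamma}$ is non-unitary --- this genuine-noise proviso is needed, since otherwise one merely relabels $\omL$ and stays PNM --- and define $\mathbf{\Gamma}$ by Eq.~(\ref{NNMevo3}). The routine checks are: continuity at $t=T^\Gamma$, because $\overline{\Lambda}_0\circ\Gamma_{T^\Gamma}=I\circ\Gamma_{T^\Gamma}=\Gamma_{T^\Gamma}$ and the one-sided limits agree; CPTP at every time as a composition of CPTP maps; and divisibility, with intermediate maps assembled piecewise from those of $\{\Gamma_t\}$ and of $\omL$ ($\overline{\Lambda}_{t-T^\Gamma}\circ V^{\Gamma}_{T^\Gamma,s}$ across the junction). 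So $\mathbf{\Gamma}$ is a valid evolution; it is non-Markovian because the non-CPTP intermediate maps of $\omL$ reappear in the region $t\ge T^\Gamma$, hence $T^{\mathbf{\Gamma}}<\infty$; and $T^{\mathbf{\Gamma}}\ge T^\Gamma>0$ because on $[0,T^\Gamma]$ the evolution coincides with the CP-divisible $\{\Gamma_t\}$ (condition (A)), $V^{\mathbf{\Gamma}}_{t,T^\Gamma}=\overline{\Lambda}_{t-T^\Gamma}$ is CPTP for $t\ge T^\Gamma$ (condition (B)), and $\Gamma_{T^\Gamma}$ is non-unitary (condition (C)). Thus $T^{\mathbf{\Gamma}}\in(0,\infty)$, i.e.\ $\mathbf{\Gamma}$ is NNM with PNM core $\omL$ by the first claim; letting $\{\Gamma_t\}$ range over all admissible pre-processings yields the stated class, and the first claim guarantees that conversely every NNM evolution with core $\omL$ arises this way. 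I expect the only genuine obstacles to be the two maximality arguments ($T^{\overline{\Lambda}}=0$ and $T^{\mathbf{\Gamma}}>0$), both of which reduce, through the $\TL$-shift identities~(\ref{PNM}) and the equivalent form of (A)+(B), to the extremality of $\TL$ in Eq.~(\ref{TLambdanuovo}) and to Lemma~\ref{ABviolation}; everything else is continuity and map-composition bookkeeping.
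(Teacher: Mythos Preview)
Your proposal is correct and follows the same route as the paper: Proposition~\ref{PNMNNM} is presented there not with a separate proof but as a summary of the construction in Eqs.~(\ref{PNM})--(\ref{NNMevo3}), with the paper merely asserting that ``it is immediate to check that $T^{\overline{\Lambda}}=0$''. Your contradiction argument for $T^{\overline{\Lambda}}=0$ via the (A+B) reformulation and the maximality of $\TL$, together with your explicit verification that the glued family $\mathbf{\Gamma}$ of Eq.~(\ref{NNMevo3}) is a valid NNM evolution, supply precisely the details the paper leaves implicit, using the same ingredients (the shift identities~(\ref{PNM}), Lemma~\ref{ABviolation}, Lemma~\ref{lemma4}).

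One small remark: your closing claim that $\mathbf{\Gamma}$ has PNM core exactly $\omL$ is slightly stronger than what the proposition asserts and need not hold verbatim in the edge case where $\omL$ admits an initial unitary stretch (so that (C) is what fails for $\omL$ at small $T'>0$); in that situation $T^{\mathbf{\Gamma}}$ can exceed $T^\Gamma$ and the extracted core is a time-shifted, unitarily rotated copy of $\omL$. This does not affect the proposition as stated, and the paper does not address it either.
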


We conclude this section by discussing the non-Markovian effects of NNM evolutions $\mL$ and their corresponding PNM cores $\omL$.
Differently from before, consider both evolutions acting from the initial time.
Any intermediate map of $\mL$ taking place later than $\TL$ is also present in the dynamics of $\omL$: the intermediate map of $\mL$ during $[s,t]$ is the same intermediate map of $\omL$ during $[s-T^\Lambda,t-T^\Lambda]$, namely $V_{t,s}=\overline V_{t-T^\Lambda,s-T^\Lambda}$. Hence, the two evolutions have the same non-CPTP intermediate maps. It follows that
the non-Markovian effects observable with $\mL$ can also be observed with $\omL$. Nonetheless, the states evolved by $\mathbf\Lambda$ receive a mitigated version of the non-Markovian effects induced by $\overline{\mathbf{\Lambda}}$, where the total attenuation is represented by $\Lambda_{T^\Lambda}$. Later, we show more precisely how this damping acts on information backflows (Section~\ref{tracedistance}) and non-Markovianity measures (Section~\ref{measures}).

\subsection{Features of PNM evolutions}\label{secfeatures}

In this section we study the differences between PNM and NNM evolutions. 
First, we remind that an information backflow can be obtained in a time interval if and only if the corresponding intermediate map is not CPTP~\cite{BD,DDSMJ}. Hence, 
 Lemma~\ref{lemma4} and the two-step simulation of NNM evolutions imply that: \textit{``The information that a quantum system loses with a NNM evolution $\mL$ during the initial time interval $[0,T^\Lambda]$ is never retrieved''}.   
Instead, if we restrict Proposition~\ref{propDgen} to PNM evolutions, it reads:
\begin{corollary}\label{corollaryPNM}
An evolution $\mathbf{\Lambda}$ is PNM if and only if for small enough initial times $\epsilon >0$ there always exists a final time $t>0$ such that $V_{t,\epsilon}$ is non-CPTP. Moreover, $0<t^\Lambda$ implies that $V_{t^\Lambda,s}$  is non-CPTP for all initial times $s\in (0,t^\Lambda)$. Instead, if $0=t^\Lambda$, for all $T>0$ the infinitesimal intermediate map $V_{t+\epsilon,t}$ is non-CPTP for an infinite number of times $t\in(0,T)$.
\end{corollary}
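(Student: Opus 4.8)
The plan is to derive Corollary \ref{corollaryPNM} as the restriction of Proposition \ref{propDgen} and Lemma \ref{lemma4} to the case $\TL = 0$, together with the characterization of PNM evolutions given right after Eq. (\ref{order}). First I would establish the two-sided equivalence in the opening sentence. For the forward direction, assume $\mathbf\Lambda$ is PNM, i.e. $\TL = 0$. Since a finite value of $\TL$ forces the evolution to be non-Markovian (stated below Eq. (\ref{TLambdanuovo}) and proved via Lemma \ref{lemma4}), $\mathbf\Lambda$ is non-Markovian, so Lemma \ref{lemma4} applies: there exist non-CPTP intermediate maps $V_{T,\TL+\epsilon} = V_{T,\epsilon}$ for one or more $T>0$ and infinitesimal $\epsilon>0$. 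Reading ``infinitesimal'' in the paper's sense (there is an $\epsilon^\ast>0$ such that the property holds for all $\epsilon\in(0,\epsilon^\ast)$) gives exactly the claim: for all small enough $\epsilon>0$ there is a final time $t>0$ with $V_{t,\epsilon}$ non-CPTP. For the converse, suppose this property holds. Then $V_{t,\epsilon}$ is non-CPTP for arbitrarily small $\epsilon>0$, hence by Lemma \ref{lemma4} (the contrapositive of its second sentence: if $V_{t,s}$ is non-CPTP then $\TL<s$) we get $\TL<\epsilon$ for all such $\epsilon$, so $\TL=0$, i.e. $\mathbf\Lambda$ is PNM.

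Next I would handle the two ``moreover'' statements by specializing Proposition \ref{propDgen} to $\TL=0$. If $0<t^\Lambda$, the first clause of Proposition \ref{propDgen} (with $\TL=0<t^\Lambda$) says $V_{t^\Lambda,s}$ is non-CPTP for all $s\in(\TL,t^\Lambda)=(0,t^\Lambda)$, which is precisely the second sentence of the corollary. If instead $0=t^\Lambda$, then since $0\le\TL\le t^\Lambda$ by Eq. (\ref{order}) we must have $\TL=t^\Lambda=0$, and the second clause of Proposition \ref{propDgen} gives: for all $T>\TL=0$ the infinitesimal intermediate map $V_{t+\epsilon,t}$ is non-CPTP for an infinite number of times $t\in(\TL,T)=(0,T)$, which is the last sentence of the corollary.

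I would then note the one consistency check that keeps the statement honest: the three cases invoked must be exhaustive and the hypotheses of Proposition \ref{propDgen} must be met. They are met because the equivalence proved in the first step shows a PNM evolution is in particular non-Markovian (so Proposition \ref{propDgen}, stated for generic non-Markovian $\mathbf\Lambda$, applies), and the dichotomy $t^\Lambda=0$ versus $t^\Lambda>0$ is exhaustive given $\TL=0\le t^\Lambda$ from Eq. (\ref{order}).

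I do not expect a serious obstacle here: the corollary is a corollary, and essentially every ingredient — Lemma \ref{lemma4}, Proposition \ref{propDgen}, the ordering (\ref{order}), and the classification below Eq. (\ref{TLambdanuovo}) — is already available. The only point requiring a little care is the bookkeeping around the word ``infinitesimal'': one must consistently use the paper's convention (a property holding on a punctured interval $(0,\epsilon^\ast)$) so that Lemma \ref{lemma4}'s ``for infinitesimal $\epsilon>0$'' translates into ``for all small enough $\epsilon>0$'' in the corollary, and conversely that the corollary's hypothesis about arbitrarily small $\epsilon$ is strong enough to feed the contrapositive of Lemma \ref{lemma4}. Beyond that, the proof is a direct substitution $\TL\mapsto 0$ into the two earlier results.
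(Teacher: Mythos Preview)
Your proposal is correct and matches the paper's approach: the paper presents Corollary~\ref{corollaryPNM} explicitly as the restriction of Proposition~\ref{propDgen} to PNM evolutions (with no separate proof given), and you carry this out by substituting $\TL=0$ into Lemma~\ref{lemma4} for the biconditional and into Proposition~\ref{propDgen} for the two ``moreover'' clauses. Your handling of the converse direction via the second sentence of Lemma~\ref{lemma4} is slightly more explicit than the paper, which leaves that direction entirely implicit, but the argument is the intended one.
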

We can say that \textit{``The information that is initially lost with a PNM evolution always takes part in later backflows''}. Indeed, as soon as a PNM evolution starts, even by considering infinitesimal initial times $\epsilon>0$, we have at least one later time $T$ such that there is an information backflow in $[\epsilon,T]$.

We move our attention to an interesting class of evolutions, namely those allowing a \textit{complete information retrieval}. More precisely,  they  first cause a partial or total degradation of the information encoded in the system and later, thanks to one or more backflows, they provide a perfect recovery of the initial information content of the system.
These instances are represented by those $\mathbf{\Lambda}$ having a non-unitary dynamical map $\Lambda_s$ at time $s$ and a unitary dynamical map $\Lambda_t=U$  at a later time $t$.
Since unitary transformations do not degrade the information content of quantum systems, all these evolutions completely retrieve, in the time interval $[s,t]$, any type of information lost in the time interval $[0,s]$. These evolutions are always PNM, where $V_{t,s}$ is not even positive trace-preserving (PTP) (proof in Appendix \ref{proofpropC}):
\begin{proposition}\label{propC}
An evolution characterized by $s<t$ such that $\Lambda_{s}$ is not unitary and $\Lambda_{t}$ is unitary is PNM with the intermediate map $V_{t,s}$ not even PTP.
\end{proposition}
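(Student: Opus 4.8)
The plan is to leverage the invertibility forced by the unitary final map, together with the rigidity fact that a CPTP map which acts bijectively on the set of states must be a unitary conjugation. To set the stage, write $\Lambda_t(\cdot)=U(\cdot)U^\dagger$: this is invertible on $\M$, with completely positive inverse $\Lambda_t^{-1}(\cdot)=U^\dagger(\cdot)U$. Divisibility gives $\Lambda_t=V_{t,s}\circ\Lambda_s$, and since the left-hand side is invertible, $\Lambda_s$ is injective, hence invertible as a linear map on $\M$; therefore $V_{t,s}=\Lambda_t\circ\Lambda_s^{-1}$ is the (unique) intermediate map, and it is trace preserving because $\Lambda_t$ and $\Lambda_s$ are (so $\Lambda_s^{-1}$ is too).

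To prove that $V_{t,s}$ is not positive I would argue by contradiction. If $V_{t,s}$ were positive, then $\Lambda_s^{-1}=\Lambda_t^{-1}\circ V_{t,s}$, being a composition of positive maps, would be positive and trace preserving, so $\Lambda_s$ would map $S(\mathcal H)$ bijectively onto itself as an affine map. An affine bijection of a convex set sends extreme points to extreme points, so $\Lambda_s$ would map pure states to pure states. But $\Lambda_s$ is a dynamical map, hence CPTP, and a CPTP map sending every pure state to a pure state is a unitary conjugation: writing $\Lambda_s(\cdot)=\sum_k K_k(\cdot)K_k^\dagger$, the requirement that $\sum_k K_k\ket\psi\bra\psi K_k^\dagger$ have rank one for all $\ket\psi$ forces the $K_k$ to be pairwise proportional, and trace preservation then makes the single surviving Kraus operator proportional to a unitary. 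This contradicts the hypothesis that $\Lambda_s$ is not unitary. Hence $V_{t,s}$ is not positive; being trace preserving it is in particular not PTP, and a fortiori not CPTP.

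It remains to check that $\TL=0$. Since $V_{t,s}$ is not CPTP, Lemma \ref{lemma4} gives $\TL<s$. Suppose $\TL>0$; as the maximum in Eq.~(\ref{TLambdanuovo}) is attained, conditions (A), (B) and (C) hold at $T=\TL$. By (C), $\Lambda_{\TL}$ is not unitary. By (B) together with $t>s>\TL$, the map $V_{t,\TL}$ is CPTP; moreover $\Lambda_t=V_{t,\TL}\circ\Lambda_{\TL}$ with $\Lambda_t$ invertible forces $\Lambda_{\TL}$ invertible and $V_{t,\TL}=\Lambda_t\circ\Lambda_{\TL}^{-1}$, whence $\Lambda_{\TL}^{-1}=\Lambda_t^{-1}\circ V_{t,\TL}$ is a composition of completely positive maps, hence CP, and trace preserving, i.e.\ CPTP. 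Then both $\Lambda_{\TL}$ and $\Lambda_{\TL}^{-1}$ are CPTP, so by the same rigidity argument $\Lambda_{\TL}$ is a unitary conjugation, contradicting (C). Hence $\TL=0$ and the evolution is PNM. (One can also bypass Lemma \ref{lemma4}: if $\TL\ge t$ then condition (A) at $T=\TL$ would already force $V_{t,s}$ CPTP, which is excluded, so necessarily $\TL<t$ and the argument above applies.)

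I expect the only genuinely non-routine ingredient to be the rigidity statement ``a CPTP map that is bijective on the set of states is a unitary conjugation'' (equivalently: a CPTP map whose inverse is also CPTP is a unitary conjugation), which is precisely where the hypothesis that $\Lambda_s$ is not unitary enters. The remaining steps are bookkeeping with divisibility, the invertibility of $\Lambda_t$, and the definition of $\TL$ in Eq.~(\ref{TLambdanuovo}).
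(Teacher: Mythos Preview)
Your argument is correct and takes a genuinely different route from the paper. The paper proves that $V_{t,s}$ is not positive via a volume argument: a non-unitary CPTP map strictly shrinks the state-space volume, so $V_{t,s}$ would have to expand volume, which a positive trace-preserving map cannot do (this requires the external references \cite{volume,dividingqm}). For $\TL=0$ the paper performs a case split on whether the dynamics is unitary on an initial interval $[0,\delta)$ and then inspects $V_{t,\epsilon}=U\circ\Lambda_\epsilon^{-1}$ directly. You instead use a single rigidity principle---a CPTP map whose inverse is positive (resp.\ completely positive) trace preserving must be a unitary conjugation---and apply it twice: once to $\Lambda_s$ (giving non-positivity of $V_{t,s}$) and once to $\Lambda_{\TL}$ (contradicting condition (C) and forcing $\TL=0$). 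Your treatment of $\TL=0$ is cleaner than the paper's: exploiting (C) directly eliminates the case analysis on an initial unitary interval. Conversely, the paper's volume picture is more geometric and makes the ``not even PTP'' conclusion vivid.

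One small caveat: the Kraus sketch you give for the rigidity step is not quite right as stated. The bare claim ``$\sum_k K_k\ket\psi\bra\psi K_k^\dagger$ rank one for all $\ket\psi$ forces the $K_k$ pairwise proportional'' fails without bijectivity; the constant channel $\rho\mapsto\ketbra{0}{0}$ with $K_1=\ketbra{0}{0}$, $K_2=\ketbra{0}{1}$ is a counterexample. You do have bijectivity in hand, and with it the conclusion is standard (e.g.\ via trace-norm isometry: both $\Lambda_s$ and $\Lambda_s^{-1}$ are PTP hence trace-norm contractive, so $\Lambda_s$ is a trace-norm isometry on states and therefore a unitary conjugation). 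Since you already flag this as the one non-routine ingredient, just tighten the justification there.
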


We end this section by clarifying the nature of PNM evolutions via some examples. 
First, one may think that evolutions being first contractive and then unitary are exotic. Interestingly, many of the well-known NM models have PNM cores with this property, e.g., dephasing, depolarizing and amplitude damping channels.

Secondly, having a non-unitary map at time $s$ and a unitary map at time $t>s$  is sufficient but not necessary for an evolution to be PNM. Hence, not all PNM evolutions present complete information retrieval. Indeed, there exist P-divisible PNM  evolutions (see Section~\ref{eternality}).

As a consequence, we might conclude that non-P-divisible PNM evolutions satisfy Proposition~\ref{propC}. The following counterexample proves that this is false. Consider the qubit dephasing map: $\Lambda_{i}(\rho)=p_i \rho+(1-p_i ) \sigma_i \rho \sigma_i$, where $\sigma_{i=x,y,z}$ are the Pauli operators and $p_i \in[0,1]$. Take an evolution such that, for $0<t_1<t_2<t_3$: (i) $\Lambda_{t_1}=\Lambda_{x}$ with $p_x<1$, (ii) $V_{t_2,t_1}=\Lambda_{z}$ with $p_z <1$ and (iii) $V_{t_3,t_2}=\Lambda_{x}^{-1}$, which  is not even PTP. Moreover, we require $\Lambda_{t}=p_x(t) \rho+(1-p_x(t)) \sigma_x \rho \sigma_x$ where $p_x(t)$ is a continuous function such that $p_x(t)<1$ and decreasing in $t\in(0,t_1)$. It is easy to see that this evolution is PNM, e.g., by evolving $\rho(0)=\ketbra{0}{0}$. Indeed, $\Lambda_{t_3}(\ketbra{0}{0})=\ketbra{0}{0}$  but $V_{t_3,\epsilon}(\ketbra{0}{0})$ is outside the state space, which proves that  $V_{t_3,\epsilon}$ is non-CPTP for infinitesimal $\epsilon>0$. Nonetheless, even if at time $t_3$ the evolved state goes back to its initialization ($\rho(0)=\rho(t_3)$), we have that  $\Lambda_{t_3}=\Lambda_{x}^{-1} \circ\Lambda_{z}\circ \Lambda_{x}=\Lambda_{z}$ is not unitary. In this case, the PNM $\mL$ completely retrieves the information content of the system only if properly initialized.

Remarkably, the initial Markovian pre-processing (useless noise) of NNM evolutions may not affect some types of information contained in specific initializations for which the information is instead completely restored. 
Now, we describe a NNM evolution $\mL$ that completely recovers the information to distinguish two states.
Consider the elements of the previous example, where now $\Lambda_{t_1}=\Lambda_{z}$, $V_{t_2,t_1}=\Lambda_{x}$ and $V_{t_3,t_2}=\Lambda_x^{-1}$. The initial states $\ketbra{0}{0}$ and $\ketbra{1}{1}$ get closer during the time interval $[t_1,t_2]$ and later they recover their (maximal) initial distance during the time interval $[t_2,t_3]$. 
Nonetheless, this evolution is NNM because the initial noise $\Lambda_{t}$ for $t\in[0,t_1]$ is useless ($T^{\Lambda}=t_1$). 
Indeed, the initial Markovian pre-processing $\Lambda_{\TL}=\Lambda_{t_1}=\Lambda_z$, although it reduces the distance of several pairs of states, e.g., $\ketbra{+}{+}$ and $\ketbra{-}{-}$, leaves $\ketbra{0}{0}$ and $\ketbra{1}{1}$ untouched. 
Hence, given a NNM evolution $\mL$, the corresponding Markovian pre-processing may not affect the information content of some initializations.
Nonetheless, the corresponding PNM core typically provides larger backflows for generic initializations. For instance, the PNM core $\overline{\mathbf{\Lambda}}$ of the previous example satisfies Proposition~\ref{propC}: it corresponds to the identity map: $\overline{\Lambda}_{t}=I_S$ at $t=t_3-t_1$.

\section{Non-divisible evolutions}\label{nondivisible}

We briefly approach the case of non-divisible evolutions, namely those for which the intermediate map $V_{t,s}$ cannot be written for all $0< s< t$. Great part of the results presented in this work are connected with $\TL$, which in turn is strictly connected with the properties of $V_{t,s}$. Hence, studying the PNM core of a non-divisible NNM evolution sounds problematic.  

We start by reminding that we are considering continuous evolutions (see Section~\ref{evolutions}). Moreover, continuous non-divisible evolutions must have an initial time interval of bijectivity $[0,T^{NB})$ when an inverse $\Lambda_t^{-1}$ exists~\cite{continuity}. Hence, we can consider intermediate maps of the form $V_{t,s}=\Lambda_t \circ \Lambda_s^{-1}$ for all $s<  T^{NB}$. Notice that this is possible for all final times $t$.  Moreover, we remember that invertibility implies divisibility, but the inverse is not true. Therefore, any non-divisible evolution is characterized by a time $T^{ND}$, in general larger than $T^{NB}$, such that $V_{t,s}$ can be defined for all $s < T^{ND}$. As a result, even for non-invertible evolutions there is always a finite time interval inside which we can look for $\TL$. We replace 
 Eq.~(\ref{TLambdanuovo}) with
\begin{equation}\label{LASTEQ}
\!\!\! T^{\Lambda}= \max \left\{\,T\, \left|   
\begin{array}{ccc}
\!\mbox{(A)} &\!\! V_{t,s} \,\,\mbox{ CPTP for all} &\!\!\!  s\leq t \leq T\\ 
\!\mbox{(B)} &\!\! V_{t,T} \,\,\mbox{ CPTP for all} &\!\!\! T\leq t  \\ 
\!\mbox{(C)} &\!\! \Lambda_T \,\, \mbox{ not unitary} &\!\!\! { T>0} \\
\!\mbox{(D)}  & T\leq T^{ND}  &
\end{array}
\right. \!\!\! \right\}  .
\end{equation}
where $T^{ND}=\infty$ for divisible evolutions. Hence, we can proceed with the PNM core extraction introduced in Section \ref{secPNM} whenever $T^\Lambda <   T^{ND}$.

An example where we  obtain the PNM core of a non-invertible NNM depolarizing evolution can be found in Appendix~\ref{noninvdepo}. This example should convince the reader that, for most of the non-divisible dynamics studied in the literature, which usually are almost always divisible \cite{bogna,DDSlong}, the evaluation of $T^\Lambda$  requires the same computational effort needed for divisible dynamics.

\section{Distinguishability backflows}\label{tracedistance}

In this section we study the relation between NNM and PNM evolutions under the point of view of information backflows, as measured by the distinguishability between pairs of evolving states. 
Hence, we analyse the potential of non-Markovian evolutions to make two states more distinguishable in a time interval when the corresponding intermediate map is not CPTP. 

Consider the scenario where we are given one state chosen randomly between $\rho_1$ and $\rho_2$, and we have to guess which state we received. The maximum probability to correctly distinguish the two states through quantum measurements is called  \textit{guessing probability} and it corresponds to
$
P_g(\rho_1,\rho_2)=(2+||\rho_1-\rho_2||_1)/4 
$, 
where $||\cdot||_1$ is the trace norm.  The maximum value 1 is obtained for perfectly distinguishable (orthogonal) states. Instead, the minimum value $1/2$ is obtained if and only if $\rho_1$ and $\rho_2$ are identical. For the sake of simplicity, we define $||\rho_1-\rho_2||_1$ to be the \textit{distinguishability} of $\rho_1$ and $\rho_2$.

Consider a composite system $SA$ with Hilbert space $S (\mathcal{H}_S\otimes\mathcal{H}_A)$, where $S$ is evolved by $\mathbf{\Lambda}$ and $A$ is an ancillary system. Hence, a generic initialization  $\rho_{SA}(0)\in S (\mathcal{H}_S\otimes\mathcal{H}_A)$ evolves as $\rho_{SA}(t)=\Lambda_t\otimes I_A(\rho_{SA}(0))$. Take two states $\rho_{SA,1}(t) $ and $ \rho_{SA,2}(t)$ evolving under the same evolution.
Any increase of $||\rho_{SA,1}(t) - \rho_{SA,2}(t)||_1$ represents a recovery of the missing information needed to distinguish the two states and is a signature of non-Markovianity~\cite{bogna,BLP}. Indeed, this quantity is contractive under quantum channels: CPTP intermediate maps $V_{t,s}$ imply a distinguishability degradation $||\rho_{SA,1}(s) - \rho_{SA,2}(s)||_1 \geq ||\rho_{SA,1}(t) - \rho_{SA,2}(t)||_1$, while a \textit{distinguishability backflow} $||\rho_{SA,1}(s) - \rho_{SA,2}(s)||_1< ||\rho_{SA,1}(t) - \rho_{SA,2}(t)||_1$ implies that $V_{t,s}$ is not CPTP. For this reason, Markovian evolutions are characterized by monotonically decreasing distinguishabilities, while non-Markovian evolutions can provide distinguishability backflows. 
We remember that for all bijective (or almost-always bijective) evolutions there exists a constructive method for an initial pair $\rho_{SA,1}(0)$, $ \rho_{SA,2}(0)$ that provides a distinguishability backflow in $[s,t]$ if and only if the corresponding intermediate map is not CPTP~\cite{bogna}. 

We proceed by studying in which cases and to what extent NNM evolutions damp distinguishability backflows if compared with their corresponding PNM cores. We saw that the initial noise $\Lambda_{\TL}$ of NNM evolutions is useless for non-Markovian phenomena. Now, we quantify how much $\Lambda_{\TL}$ suppresses backflows for each specific initialization.

\begin{proposition}\label{propbackflows}
Consider a NNM evolution $\mathbf{\Lambda}$ providing a  distinguishability backflow in $[s,t]$ by evolving $\rho_{SA,1}(0)$ and $\rho_{SA,2}(0)$, where $V_{t,s}$ is not CPTP. If $\rho_{SA,1}(T^\Lambda)$ and $\rho_{SA,2}(T^\Lambda)$ are not orthogonal, the corresponding PNM core $\overline{\mathbf{\Lambda}}$ provides a larger backflow in $[s-T^\Lambda, t-T^\Lambda]$ by evolving a pair of orthogonal states, where $V_{t,s}=\overline V_{t-T^\Lambda,s-T^\Lambda}$. The proportionality factor between the backflows is $2/||\rho_{SA,1}(T^\Lambda)-\rho_{SA,2}(T^\Lambda)||_1>1$.
\end{proposition}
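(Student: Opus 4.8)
The idea is to track how the distinguishability of the pair evolves, normalise it at time $T^\Lambda$, and hand the normalised pair to the PNM core. Write $D(t)=||\rho_{SA,1}(t)-\rho_{SA,2}(t)||_1$. By hypothesis $\mathbf\Lambda$ gives a backflow in $[s,t]$: $D(s)<D(t)$ with $V_{t,s}$ not CPTP, and necessarily $s>\TL$ by Lemma \ref{lemma4}. Because $\mathbf\Lambda$ is CP-divisible on $[0,\TL]$ (condition (A) of Eq.~(\ref{TLambdanuovo})) the state pair can only have lost distinguishability on $[0,\TL]$, but that is not what I need; what I need is the explicit relation $V_{t,s}=\overline V_{t-\TL,s-\TL}$ from Eq.~(\ref{PNM}), which says the \emph{same} non-CPTP intermediate map acts, shifted in time, in the dynamics of the PNM core $\overline{\mathbf\Lambda}$.

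The construction: set $\sigma_{SA,i}=\rho_{SA,i}(\TL)=\Lambda_{\TL}\otimes I_A(\rho_{SA,i}(0))$ for $i=1,2$, and let $c:=||\sigma_{SA,1}-\sigma_{SA,2}||_1=D(\TL)\in(0,2]$, strictly less than $2$ precisely because the two states are assumed not orthogonal. Decompose the traceless Hermitian operator $\sigma_{SA,1}-\sigma_{SA,2}=c\,(\xi_+-\xi_-)$ where $\xi_\pm$ are the normalised positive/negative parts (orthogonal states of equal trace norm $1$, summing the Jordan decomposition). Feed $\xi_+,\xi_-$ as the \emph{initial} states of the PNM core: $\overline\rho_{SA,i}(0)=\xi_\pm$. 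These are orthogonal, so $\overline D(0)=||\xi_+-\xi_-||_1=2$. Evolving under $\overline{\mathbf\Lambda}$, linearity of $\overline V_{\tau',\tau}\otimes I_A$ gives, for every $\tau\le\tau'$,
\begin{equation}\label{scaling}
\overline\rho_{SA,1}(\tau)-\overline\rho_{SA,2}(\tau)=\tfrac1c\,\big(\rho_{SA,1}(\tau+\TL)-\rho_{SA,2}(\tau+\TL)\big),
\end{equation}
since both sides are the image of the same operator $\tfrac1c(\sigma_{SA,1}-\sigma_{SA,2})=\xi_+-\xi_-$ under the map $\overline\Lambda_\tau\otimes I_A=V_{\tau+\TL,\TL}\otimes I_A$ (for the dynamical map) and more generally $\overline V_{\tau',\tau}\otimes I_A$ between intermediate times (Eq.~(\ref{PNM})). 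Taking trace norms, $\overline D(\tau)=D(\tau+\TL)/c$.

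Now evaluate the backflow of the PNM core on the shifted interval $[s-\TL,t-\TL]$: using Eq.~(\ref{scaling}) at the two endpoints,
\begin{equation}\label{bfratio}
\overline D(t-\TL)-\overline D(s-\TL)=\tfrac1c\,\big(D(t)-D(s)\big),
\end{equation}
which is a \emph{positive} multiple of the original backflow $D(t)-D(s)>0$, with proportionality constant $1/c=2/||\rho_{SA,1}(\TL)-\rho_{SA,2}(\TL)||_1>1$ because $c<2$. This is exactly the claimed statement, and it simultaneously confirms the intermediate map responsible is $\overline V_{t-\TL,s-\TL}=V_{t,s}$, which is the non-CPTP map we started with. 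I should also remark that $\overline D(\tau)$ never exceeds $2$, the trace-norm bound, consistent with $c<2$; and that the two PNM-core initial states $\xi_\pm$ are genuine density matrices (positive, unit trace) by construction of the Jordan decomposition.

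The only subtlety — the "main obstacle" — is the degenerate hypothesis: the proposition assumes $\rho_{SA,1}(\TL),\rho_{SA,2}(\TL)$ are \emph{not} orthogonal, i.e. $c<2$, which is exactly what makes $1/c>1$ and the amplification strict. If they were orthogonal the normalisation would be trivial ($c=2$) and one recovers the unamplified statement; the content of the proposition is precisely that generic initial pairs are contracted by the useless Markovian pre-processing $\Lambda_{\TL}$, so that re-injecting an orthogonal pair into the core strictly boosts every subsequent backflow by the reciprocal of that contraction factor. Everything else is the linearity of quantum maps under the scaling in Eq.~(\ref{scaling}) together with the identity~(\ref{PNM}) relating $\mathbf\Lambda$ to its PNM core, both already established.
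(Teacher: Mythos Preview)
Your approach is essentially identical to the paper's: both construct the orthogonal initial pair for $\omL$ via the Jordan (positive/negative) decomposition of the traceless Hermitian operator $\rho_{SA,1}(\TL)-\rho_{SA,2}(\TL)$ and then exploit linearity of the maps together with the identification $\overline V_{t-\TL,s-\TL}=V_{t,s}$ from Eq.~(\ref{PNM}).

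There is, however, a factor-of-two bookkeeping slip. If $\xi_\pm$ are orthogonal \emph{density matrices} (trace one), then $\|\xi_+-\xi_-\|_1=2$, so the correct decomposition is $\sigma_{SA,1}-\sigma_{SA,2}=(c/2)(\xi_+-\xi_-)$, not $c\,(\xi_+-\xi_-)$; with your formula one would get $\|\sigma_{SA,1}-\sigma_{SA,2}\|_1=2c\neq c$. Consequently your scaling relation should carry the prefactor $2/c$ rather than $1/c$, and the line ``$1/c=2/\|\rho_{SA,1}(\TL)-\rho_{SA,2}(\TL)\|_1$'' literally asserts $1/c=2/c$. Once this normalisation is repaired the argument is complete and matches the paper's proof.
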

\begin{proof} We define $\Delta(T^\Lambda)=\rho_{SA,1}(T^\Lambda)- \rho_{SA,2}(T^\Lambda)$, which is hermitian and traceless at all times. This operator corresponds to a segment in the state space with direction and length (as measured by $||\cdot ||_1$) defined by $\rho_{SA,1}(T^\Lambda)$ and $ \rho_{SA,2}(T^\Lambda)$. 
Since $\rho_{SA,1}(\TL)$ and $ \rho_{SA,2}(\TL)$ are not orthogonal, $||\Delta(\TL)||_1<2$. We define $\overline{\Delta}(0)=2\Delta(t)/||\Delta(\TL)||_1$, namely a segment with the same direction of $\Delta(\TL)$ but with $||\overline{\Delta}(0)||_1=2$. Since $\overline{\Delta}(0)$ is hermitian and traceless, it can be diagonalized with  a unitary $ U$, namely  $\overline{\Delta}(0)= U \overline{\Delta}_D(0)  U^\dagger$, where  $\overline{\Delta}_D(0)=\mbox{diag}(\delta_1,\delta_2,\cdots,\delta_d)$, $\delta_i \in [-1,1]$ and $d=d_Sd_A$ is the dimension of $\mathcal{H}_S\otimes \mathcal{H}_A$. Moreover, we have that $||\overline{\Delta}(0)||_1=||\overline{\Delta}_D(0)||_1=\sum_i |\delta_i|=2$ and $\trr{\overline{\Delta}(0)}=\trr{\overline{\Delta}_D(0)}= \sum_i \delta_i=0$. Therefore, we can write $\overline{\Delta}_D(0)=\sigma^+-\sigma^-$, where $\sigma^+$ ($\sigma^-$) is a diagonal density matrix obtained by replacing the negative diagonal elements of $\overline{\Delta}_D(0)$ ($-\overline{\Delta}_D(0)$) with a zero. We define $\overline\rho_{SA,1}(0)=U\sigma_1U^\dagger$ and $\overline\rho_{SA,2}(0)=U\sigma_2U^\dagger$. Notice that $\overline\rho_{SA,1}(0)-\overline\rho_{SA,2}(0)=\overline\Delta(0)$: the two states $\overline\rho_{SA,1}(0)$ and $\overline\rho_{SA,2}(0)$ are orthogonal and their difference $\overline\Delta(0)$ is proportional to $\Delta(\TL)=\rho_{SA,1}(\TL)- \rho_{SA,2}(\TL)$. Hence, if $\mathbf{\Lambda}$ provides a distinguishability backflow in $[s,t]$ by evolving $\rho_{SA,1}(0)$ and $ \rho_{SA,2}(0) $, $\omL$ provides a (larger) backflow in $[s-T^\Lambda,t -T^\Lambda]$ by evolving $\overline\rho_{SA,1}(0)$ and $ \overline\rho_{SA,2}(0) $, where the latter backflow is larger than the former by the factor $2/||\Delta(\TL)||_1>1$. 
\end{proof}

Notice that in this proof we provide a constructive method to build the  pairs of states $\overline{\rho}_{SA,1}(0)$, $\overline \rho_{SA,2}(0)$ which, if evolved with the corresponding PNM core, provide larger backflows.
The meaning of this proposition matches the information-theoretic interpretation of PNM evolutions gave above. Indeed, consider a NNM evolution $\mL$ and those pairs of states that provide the largest distinguishability backflows in a time interval $[s,t]$. It can be proven that the corresponding initial states $\rho_{SA,1}(0)$, $\rho_{SA,2}(0)$ are orthogonal. If at time $\TL$  the two states $\rho_{SA,1}(\TL)$ and $\rho_{SA,2}(\TL)$ are still orthogonal when evolved by $\mL$, it means that the noise $\Lambda_\TL$ did not dissipate the information to distinguish this particular pair of states, as discussed in  Section~\ref{secfeatures}. On the contrary, if $\rho_{SA,1}(\TL)$ and $\rho_{SA,2}(\TL)$ are no more orthogonal, $\Lambda_\TL$ dissipated part of the information useful to distinguish the two states unnecessarily: this information does not take part to later backflows. Hence, in these cases the corresponding PNM cores provide larger backflows. 

The results of Proposition~\ref{propbackflows} are independent from $s$, $t$ and the magnitude of the corresponding distinguishability backflow. Indeed,  it solely depends on the information lost after the Markovian pre-processing, namely the distinguishability at time $\TL$. Hence, the results of Proposition~\ref{propbackflows} can be directly extended to all the backflows that the same pair shows, even without knowing their magnitude and when they take place 
\begin{corollary}\label{corpropbackflows}
Consider a NNM evolution $\mathbf{\Lambda}$ and its corresponding PNM core $\omL$. For any pair of states $\rho_{SA,1}(0)$, $\rho_{SA,2}(0)$ that   are not orthogonal at time  $\TL$ and provide one or more distinguishability backflows when evolved by $\mathbf{\Lambda}$, there exists a corresponding pair of orthogonal states such that, if evolved by $\omL$, each backflow is larger by a factor $2/||\rho_{SA,1}(T^\Lambda)-\rho_{SA,2}(T^\Lambda)||_1>1$. The intermediate maps generating the backflows of the two evolutions are the same and the corresponding time intervals differ by a $\TL$ shift.
\end{corollary}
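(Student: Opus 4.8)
The plan is to reduce Corollary \ref{corpropbackflows} directly to Proposition \ref{propbackflows} and its proof, observing that the constructive procedure appearing there depends only on the state of affairs at the single time $\TL$, and not on $s$, $t$, or the size of any particular backflow. Concretely, suppose $\rho_{SA,1}(0)$, $\rho_{SA,2}(0)$ evolve under $\mathbf\Lambda$ and exhibit distinguishability backflows in a (possibly infinite) collection of intervals $[s_k,t_k]$ with $s_k\ge\TL$, and assume $\rho_{SA,1}(\TL)$ and $\rho_{SA,2}(\TL)$ are not orthogonal, so that $\|\Delta(\TL)\|_1<2$ where $\Delta(\TL)=\rho_{SA,1}(\TL)-\rho_{SA,2}(\TL)$. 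First I would run exactly the construction of the proof of Proposition \ref{propbackflows}: rescale to $\overline\Delta(0)=2\Delta(\TL)/\|\Delta(\TL)\|_1$, diagonalize with a unitary $U$, split the diagonal traceless operator as $\sigma^+-\sigma^-$, and set $\overline\rho_{SA,1}(0)=U\sigma_1U^\dagger$, $\overline\rho_{SA,2}(0)=U\sigma_2U^\dagger$, obtaining a single fixed pair of orthogonal states whose difference $\overline\Delta(0)$ is proportional to $\Delta(\TL)$ with factor $2/\|\Delta(\TL)\|_1$.

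The key point is then linearity of the dynamics together with the identity $\overline V_{t-\TL,s-\TL}=V_{t,s}$ from Eq.~(\ref{PNM}). Since $\overline{\mathbf\Lambda}$ acts linearly on operators, evolving the fixed orthogonal pair under $\omL$ gives, for every time $\tau\ge0$, $\overline\rho_{SA,1}(\tau)-\overline\rho_{SA,2}(\tau)=\overline\Lambda_\tau\otimes I_A\big(\overline\Delta(0)\big)=\tfrac{2}{\|\Delta(\TL)\|_1}\,\overline\Lambda_\tau\otimes I_A\big(\Delta(\TL)\big)$. Because $\overline\Lambda_\tau=V_{\tau+\TL,\TL}$ and $\rho_{SA,i}(\tau+\TL)=V_{\tau+\TL,\TL}\otimes I_A(\rho_{SA,i}(\TL))$, the right-hand side equals $\tfrac{2}{\|\Delta(\TL)\|_1}\big(\rho_{SA,1}(\tau+\TL)-\rho_{SA,2}(\tau+\TL)\big)$. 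Taking trace norms, $\|\overline\rho_{SA,1}(\tau)-\overline\rho_{SA,2}(\tau)\|_1=\tfrac{2}{\|\Delta(\TL)\|_1}\|\rho_{SA,1}(\tau+\TL)-\rho_{SA,2}(\tau+\TL)\|_1$ for all $\tau\ge0$. Hence along the shifted interval $[s_k-\TL,t_k-\TL]$ the distinguishability of the new pair is, at both endpoints (and in fact at every instant), exactly $2/\|\Delta(\TL)\|_1$ times that of the original pair under $\mathbf\Lambda$ on $[s_k,t_k]$; subtracting the two endpoint values shows each backflow of $\omL$ is larger by the constant factor $2/\|\Delta(\TL)\|_1>1$. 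The statement that the intermediate maps generating the backflows coincide and the intervals differ by a shift of $\TL$ is then immediate from $V_{t,s}=\overline V_{t-\TL,s-\TL}$.

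I would close by noting that no step required knowing \emph{which} intervals $[s_k,t_k]$ witness backflows or how large they are: the uniform multiplicative relation above holds for the whole trajectory, so it specializes simultaneously to every backflow exhibited by the pair. The main (and essentially only) obstacle is bookkeeping the time shift correctly — making sure that $\omL$ started at time $0$ is compared with $\mathbf\Lambda$ started at time $0$, and that the proportionality constant is genuinely independent of $\tau$, which is exactly what linearity of $\overline\Lambda_\tau$ together with Eq.~(\ref{PNM}) guarantees. There is no positivity or complete-positivity subtlety here, since $V_{t,s}=\overline V_{t-\TL,s-\TL}$ already carries over the non-CPTP character of the relevant intermediate maps from Proposition \ref{propbackflows}.
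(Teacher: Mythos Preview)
Your proposal is correct and follows essentially the same approach as the paper: the paper does not give a separate proof of the corollary but simply observes, immediately before stating it, that the construction in the proof of Proposition~\ref{propbackflows} depends only on $\Delta(\TL)$ and not on the particular interval $[s,t]$ or the size of the backflow, so the result extends simultaneously to all backflows of the given pair. You make this observation explicit by spelling out the linearity argument and the time-shift identity $\overline V_{t-\TL,s-\TL}=V_{t,s}$, which is a faithful (and slightly more detailed) rendering of the paper's reasoning.
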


 \section{Non-Markovianity measures}\label{measures}
 
A non-Markovianity measure $M(\mathbf{\Lambda})$ quantifies  the non-Markovian content of evolutions, where Markovianity implies $M(\mL)=0$, while $M(\mL)>0$ implies  $\mL$ to be non-Markovian. As we see below, those measures that are connected with the actual time evolution of one or more states are  influenced by the initial noisy action that precedes non-Markovian phenomena. Hence, in these cases PNM cores provide higher non-Markovianity measures $M(\mathbf{\Lambda})\leq M(\overline{\mathbf{\Lambda}})$: the largest values that any Markovianity measure of this type can assume can be obtained with PNM evolutions and any value assumed with a NNM evolution can be matched or outperformed by a PNM evolution.
The main representatives of this class of measures are defined through the collection of the information backflows obtainable with $\mL$, where this quantity is maximized with respect all the possible system initializations~\cite{BLP,Luo,DDSMJ,WWTAWWTANM}. In the following, we refer to these cases as flux measures. Note that there are non-Markovianity measure satisfying $M(\mathbf{\Lambda})\leq M(\overline{\mathbf{\Lambda}})$ while not being flux measures, e.g.,~\cite{DDSV}.

On the contrary, those measures $M$ that solely depend on the features of intermediate maps, without considering the action of the preceding dynamics, imply $M(\mathbf{\Lambda})= M(\overline{\mathbf{\Lambda}})$. Indeed, a NNM evolution $\mathbf{\Lambda}$ and its corresponding PNM evolution $\overline{\mathbf{\Lambda}}$ have the same non-CPTP intermediate maps. The main representatives of this second class are the Rivas-Huelga-Plenio~\cite{RHP} measure and the $k$-divisibility hierarchy~\cite{SAB}.  We underline that, while flux measures represent the amplitude of phenomena that can be observed, this is not true for this class. 
 
 \subsection{Flux measures}
Flux measures quantify the non-Markovian content of evolutions as follow. Pick an information quantifier and maximize the sum of all the corresponding backflows that $\mL$ shows with respect to all the possible initializations. More precisely, consider a functional $W(\rho_{SA}(t))=W(\Lambda_t\otimes I_A(\rho_{SA}))\geq 0$ which represents the amount of information as measured by $W$ contained in the evolving state. We can also consider quantifiers with multiple input states , e.g., state distinguishability. In order to consider $W$ an information quantifier for $S$, we require it to be contractive under quantum channels on $S$, namely $W(\rho_{SA}(0))\geq  W(\Lambda\otimes I_A (\rho_{SA}))$ for all $\rho_{SA}$ and CPTP $\Lambda$. 

We define the \textit{information flux} as
\begin{equation}\label{flux}
\sigma(\Lambda_t\otimes I_A(\rho_{SA}))=\frac{d}{dt} W (\Lambda_t\otimes I_A(\rho_{SA})) \, .
\end{equation}
Since Markovianity corresponds to CP-divisibility, Markovian evolutions imply non-positive fluxes. Instead, if  $\sigma(\Lambda_t\otimes I_A(\rho_{SA}))>0$, we say that the evolution of $\rho_{SA}$ \textit{witnesses} the non-Markovian nature of $\mathbf{\Lambda}$ through a backflow of $W$.

Flux measures consist of the greatest amount of $W$ that an evolution can retrieve during the evolution with respect to any initialization, namely:
\begin{equation}\label{NW}
M^W(\mathbf{\Lambda})=\max_{\rho_{SA}} \int_{\substack{  \sigma>0}} \sigma_t(\Lambda_t\otimes I_A(\rho_{SA})) dt \, ,
\end{equation}
where the maximization is performed over the whole system-ancilla state space. 

Being $\overline{\mathbf{\Lambda}}$ the PNM core of $\mathbf{\Lambda}$ and $\mbox{Im} (\Lambda_t\otimes I_A)$ the image of the $\mL$ at time $t$, namely those system-ancilla states that can be obtained as an output of $\Lambda_t\otimes I_A$, we obtain:
\begin{eqnarray}\nonumber
&& \!\!\!\!\!\!\! M^W(\mathbf{\Lambda}) = \max_{\rho_{SA}} \int_{\substack{t\geq T^\Lambda \\  \sigma>0}} \sigma (\Lambda_t\otimes I_A (\rho_{SA})) \,dt \\ \nonumber
&&\!\!\!\!\!= \max_{\rho_{SA}\in \mbox{\small Im}(\Lambda_{T^\Lambda} \otimes I_A)} \int_{\substack{t\geq T^\Lambda \\  \sigma>0}}\sigma (V_{t,T^{\Lambda}}\otimes I_A (\rho_{SA}))  \,dt
\\ \nonumber
&&\!\!\!\!\! = \max_{ \rho_{SA}\in \mbox{\small Im}(\Lambda_{T^\Lambda} \otimes I_A)} \int_{\substack{  \sigma>0}} \sigma(\overline{\Lambda}_t \otimes I_A (\rho_{SA})) \,dt \\ \label{diffN}
 &&\!\!\!\!\!\leq \max_{ \rho_{SA}} \!\int_{\substack{ \sigma>0}} \sigma(\overline{\Lambda}_t \otimes I_A (\rho_{SA}))\,dt = M^W(\overline{\mathbf{\Lambda}})  \, ,
\end{eqnarray}
where the first equality is justified by the fact that backflows can only happen for $t\geq T^\Lambda$ (any NNM evolution is CP-divisible in $[0,T^\Lambda]$), the second equality is a simple consequence of $\Lambda_t=V_{t,T^\Lambda}\circ \Lambda_{T^\Lambda}$, the third equality follows from Eq.~(\ref{PNM}) and the inequality follows from the enlargement of the maximization space. 
 
It is interesting to understand when we can obtain $M^W(\mathbf{\Lambda})< M^W(\omL)$. 
Consider the information quantifier $D(\rho_{SA,1}(t),\rho_{SA,2}(t))=||\rho_{SA,1}(t)-\rho_{SA,2}(t)||_1$ for a fixed ancilla $A$, where in this case we consider the evolution $\mL$. We call $\{\rho^{i}_{SA,1},\rho^{i}_{SA,2}\}$ those pairs that allow to obtain the maximum of Eq.~(\ref{NW}).   Notice that these pairs of states are always initially orthogonal: $D(\rho^{i}_{SA,1}(0),\rho^{i}_{SA,2}(0))=2$ for all $i$.  Hence, if $\sigma^D(\Lambda_t\otimes I_A (\rho_{SA,1}),\Lambda_t\otimes I_A (\rho_{SA,2})) = \sigma^D(\Lambda_t\otimes I_A ,\rho_{SA,1},\rho_{SA,2})$ is the flux associated to $D(\rho_{SA,1}(t),\rho_{SA,2}(t))$ as in Eq.~(\ref{flux}), we have:
\begin{eqnarray}\nonumber
&&\!\!\!\!\!\!\!\!\!\!\!\!\!\!\!M^D(\mL)= \max\limits_{\substack{\rho_{SA,1} \\ \rho_{SA,2}}} \!\int_{
\substack{ \sigma^D>0}
} \sigma^D(\Lambda_t\otimes I_A ,\rho_{SA,1},\rho_{SA,2}) \,dt
\\ \label{measureNMdist}
&&\!\!\!\!\!\!\!\!\!\!\!\!\!\!\!=  \int_{
\substack{  \sigma^D>0}} \sigma^D(\Lambda_t\otimes I_A ,\rho_{SA,1}^i,\rho_{SA,2}^i) \,dt  \,\,\, \mbox{for all } i.
\end{eqnarray}
Thanks to Corollary \ref{corpropbackflows}, we can prove that:
\begin{equation} \nonumber
M^D(\mL)\!\leq\! M^D(\omL) \!=\! \max_i \frac{2 M^D(\mL)}{||\rho^{i}_{SA,1}(T^\Lambda)-\rho^{i}_{SA,2}(T^\Lambda)||_1}  \, ,
\end{equation}
where we set $\rho^{i}_{SA,1}(\TL)=\Lambda_\TL\otimes I_A (\rho^{i}_{SA,1}(0))$ and $\rho^{i}_{SA,2}(\TL)=\Lambda_\TL\otimes I_A( \rho^{i}_{SA,2}(0))$.
Hence, if the information content of the pairs $\{\rho^{i}_{SA,1},\rho^{i}_{SA,2}\}$ at time $\TL$ is lower than at the initial time when evolved by $\mL$, then $M^D(\mL)<M^D(\omL)$. Moreover, the proportionality factor between the two measures is given by the states $\{\rho^{j}_{SA,1},\rho^{j}_{SA,2}\}$ that get the closest at time $\TL$. In Section~\ref{secdepo} we explicitly evaluate $M^D(\mL)$ and $M^D(\omL)$ in case of depolarizing evolutions and we show that, even without ancillary systems, $M^D(\mL)<M^D(\omL)$ is always verified.

A second measure similar to $M^W$ is given by~\cite{vault}:
\begin{equation}\label{Nmax}
M^{W,max}(\mathbf{\Lambda})=\max_{s< t,\, \rho_{SA}} \{0,W(\rho_{SA}(t))-W(\rho_{SA}(s))\} , 
\end{equation}
which corresponds to the largest backflow of $W$ that the dynamics is able to show in a single time interval. Finally, a third measure is~\cite{vault}:
\begin{equation}\label{Nav}
M^{W,av}(\mathbf{\Lambda})=\max_{t>0, \rho_{SA}} \{0,  W(\rho_{SA}(t))-\langle W(\rho_{SA}(t))\rangle \} , 
\end{equation}
where $\langle W(\rho_{SA}(t)) \rangle = t^{-1} \int_0^t W(\rho_{SA}(s)) ds $. This measure
 corresponds to the largest difference, with respect to $t$, between the information $W$ at time $t$ and its average in the interval $[0,t]$. Moreover, $M^{W,av}$ has a precise operational meaning connected with the probability to store and faithfully retrieve information by state preparation and measurement, where an attack performed by an eavesdropper may occur. It can be proven~\cite{vault} that, for any $\mL$ and $W$, $M^{W,av} (\mathbf{\Lambda}) \leq M^{W,max} (\mathbf{\Lambda}) \leq M^{W} (\mathbf{\Lambda})$. 
Similarly to Eq.~(\ref{diffN}), it is possible to demonstrate that:
\begin{eqnarray}\nonumber  M^{W,max}(\mathbf{\Lambda}) &\leq&   M^{W,max}(\overline{\mathbf{\Lambda}})  \,  , \\ \nonumber 
M^{W,av}(\mathbf{\Lambda}) &\leq&   M^{W,av }(\overline{\mathbf{\Lambda}}) \, .
\end{eqnarray}

\subsection{Incoherent mixing measure}

	A second type of  non-Markovianity measure corresponds to the minimal incoherent Markovian noise needed to make a non-Markovian evolution $\mL$ Markovian~\cite{DDSV}. In order to describe this measure, we first consider an evolution obtained as convex combination of $\mL$ and a generic Markovian evolution $\mL^M$.
We consider the mixed evolution
$
\mathbf{\Lambda}^{mix}_p=(1-p)\mathbf{\Lambda}+p\mathbf{\Lambda}^M \, 
$. 
and define a non-Markovianity measure by looking for the minimal value of $p$, hence the minimal amount of Markovian noise, such that $\mL^{mix}_p$ is Markovian, namely:
\begin{equation}\label{Nmix}
M^{mix}(\mathbf{\Lambda})=\min\{\,p\,|\, \exists \mathbf{\Lambda}^M \mbox{ s.t. } \mathbf{\Lambda}^{mix}_p \mbox{ is Markovian} \} \, .
\end{equation}
In Appendix \ref{appNmix} we prove that:
\begin{equation}\nonumber
M^{mix}(\mathbf{\Lambda})\leq M^{mix}(\omL) \, .
\end{equation}
Finally, in Section~\ref{secdepo} we show that $M^{mix}(\mathbf{\Lambda})< M^{mix}(\omL)$ for all NNM depolarizing evolutions $\mL$.

\subsection{RHP measure and $k$-divisibility}

Consider a generic PNM evolution $\overline{\mathbf{\Lambda}}$ and all the corresponding NNM evolutions $\mathbf{\Lambda}$ that can be obtained from $\overline{\mathbf{\Lambda}}$ with a Markovian pre-processing. As we saw, $\overline{\mathbf{\Lambda}}$ and all its corresponding $\mL$ have the same non-CPTP intermediate maps. 
Therefore, the non-Markovianity measures that solely depend on the properties of non-CPTP intermediate maps, since they are not influenced by the particular (useless) noise that precedes their action, assume the same value for $\omL$ and all its corresponding $\mL$. This is the case of the RHP measure $\mathcal{I}(\mL)$ (see Eq.~(4) from Ref.~\cite{RHP}), and the $k$-divisibility non-Markovian degree NMD$[\mL]$  (see Ref.~\cite{SAB}):
\begin{equation}
\mathcal{I}(\omL) = \mathcal{I}(\mL) \,\, \, \mbox{ and } \,\,\,  \mbox{NMD}(\omL)= \mbox{NMD}(\mL) \, .
\end{equation}

\section{Entanglement breaking property}\label{EB}

We call $C(\rho_{AB})$ a correlation measure for the bipartite system $AB$ if: (i) $C(\rho_{AB})\geq 0$ for all $\rho_{AB}$, (ii) $C(\rho_{AB})= 0$ for all product states $\rho_{A}\otimes \rho_B$, (iii) $C(\Lambda_A\otimes I_B(\rho_{AB}))\leq C(\rho_{AB})$ and $C(I_A\otimes \Lambda_B(\rho_{AB}))\leq C(\rho_{AB})$ for all $\rho_{AB}$ and CPTP maps $\Lambda_A$ and $\Lambda_B$. 
Entanglement measures, denoted here by $E$, capture only non-classical correlations. Indeed, they satisfy the additional property of being non-increasing under local operations, namely (iii), assisted by classical communication (LOCC). As a consequence, $E(\rho_{AB})=0$ for all separable states, namely those that can be written as statistical mixtures of product states: $\rho_{AB}=\sum_i p_i \rho_A\otimes \rho_B$, where $\{p_i\}_i$ is a probability distribution.

We discuss how the link between NNM and PNM evolutions behaves with respect to the entanglement breaking (EB) property. A quantum channel $\Lambda_S$ is EB if  it destroys the entanglement of any input state, namely if $\Lambda_S \otimes I_A (\rho_{SA})$ is separable for all $\rho_{SA}$. Consider a generic $\mL$. We say that it is EB if  there exists a time $t^{EB,\Lambda}>0$ such that $\Lambda_t$ is EB for all $t\geq t^{EB,\Lambda}$. 

Take a PNM evolution $\omL$ and a NNM evolution $\mL$ that can be obtained with a Markovian pre-processing of $\omL$. This pre-processing cannot increase the amount of entanglement of any state. Hence, if $\omL$ is EB, then $\mL$ is EB. Nonetheless, in case of $\omL$ and $\mL$ EB, there is no general order for the corresponding EB times: $t^{EB,\Lambda}>t^{EB,\overline\Lambda}$ and $t^{EB,\Lambda}<t^{EB,\overline\Lambda}$ are both possible.

We have to keep in mind that, if a generic NNM evolution $\mL$ is EB, we cannot immediately say anything about the EB nature of $\omL$ and we must study the particular dynamics more in detail. 
Indeed, it is easy to find NNM evolutions $\mL$ with EB useless noises $\Lambda_{\TL}$, where the corresponding PNM core $\omL$ is not EB. 
Also, there exist cases where the Markovian pre-processing $\Lambda_{\TL}$ is not EB, the PNM core $\omL$ is not EB, but the corresponding NNM evolution $\mL$ is EB.

\subsection{Activation of correlation backflows}

We now discuss a technique focused on entanglement revivals which can be easily generalized to other correlation measures. 
The isolation of the PNM core of a NNM evolution may lead to the \textit{activation} of entanglement backflows.
Take a bipartite system $SA$, where $S$ is evolved by $\mL$ and $A$ is an ancilla.
Whenever we have a backflow of $E$, the same backflow can also observed with $\omL$, namely the corresponding PNM evolution. Moreover, as we saw for the corresponding flux non-Markovianity measure, $M^{E}(\mL)\leq M^{E}(\omL)$. What is interesting is the possibility to {activate} backflows of entanglement through the isolation of the PNM core, namely when $M^{E}(\mL)=0$ and $M^{E}(\omL)>0$. This scenario is made possible when $\mL$ is EB, the corresponding non-CPTP intermediate maps $V_{t,s}$ take place only for $t^{EB,\Lambda}\leq s<t$ and the corresponding PNM core $\omL$ is not EB. In this case, when an entangled state is evolved by $\mL$ and a non-CPTP intermediate map takes place, all the entanglement has already been destroyed and no backflows are possible. Instead, for a system evolving under $\omL$, when the (same) non-CPTP intermediate map takes place entanglement can be non-zero and backflows are allowed.

Whenever a non-Markovian evolution does not provide correlation backflows, additional ancillary degrees of freedom can activate the possibility to observe backflows. This phenomena has already been studied for entanglement~\cite{Janek,DDSDONATO} and Gaussian steeriing~\cite{DDSDONATO}. For instance, instead of evaluating entanglement among $S$ and $A$, we would need to evaluate it among $SA'$ and $A$, where $A'$ is a second ancilla. Hence, our construction allows a different strategy to obtain correlation backflows in those situations where an $SA$ setup does not show any: instead of implementing additional ancillary systems, which may result in an experimental setup that is more demanding to handle or not even realisable with the available tools, we can simply consider the PNM core of the studied evolution.

\section{Depolarizing model}\label{secdepo}

We apply our results to a simple model called depolarizing. Starting from a generic NNM depolarizing evolution $\mL$, we show how to find $\TL$, $\tau^\Lambda$ and $t^\Lambda$, the corresponding PNM evolution $\omL$ and we calculate the gains in terms of information backflows and non-Markovianity measures that $\omL$ provides with respect to $\mL$. 
We conclude by applying our technique to an explicit toy model. Moreover, we show how our approach can be directly applied to non-bijective depolarizing evolutions in Appendix \ref{noninvdepo}.

We define a generic depolarizing evolution $\mL$  through the corresponding dynamical map, namely
\begin{eqnarray}\nonumber
\Lambda_t (\,\cdot\,) =  f(t) I_S ( \,\cdot\,) + (1-f(t)) \tr{ \,\cdot\, }\frac{\mathbbm{1}_S}{d}  \, ,
\end{eqnarray}
where $d$ is the dimension of the system $S$,
$I_S$ is the identity map and ${\mathbbm{1}_S}/{d}$ is the maximally mixed state~\cite{DDSV}. The behaviour of the evolution is determined by the \textit{characteristic function} $f(t)$. The dynamical maps $\Lambda_t$ are CPTP, continuous in time and such that $\Lambda_0=I_S$ if and only if $f(t)\in [-1/(d^2-1), 1]$ is a continuous function such that $f(0)=1$. 
For the sake of simplicity, from now on we restrict our attention to depolarizing evolutions with $f(t)\in[0,1]$ for all $t\geq 0$. Those cases of $f(t)$ assuming negative values necessitate a simple generalization of the techniques used here. An in-depth analysis of depolarizing evolutions with $f(t)\in [-1/(d^2-1), 1]$ can be found in Ref.~\cite{DDSV}.
 The evolution $\mL$ is invertible if and only if $f(t)>0$ at all times. Indeed,  $f(t^{NB})=0$ implies that every initial state is mapped into the same (maximally mixed) state:  $\Lambda_{t^{NB}}(\rho_S(0))= \mathbbm{1}_S/d$. In this case $\Lambda_{t^{NB}}$ is non-invertible and  $V_{t,t^{NB}}$ cannot be defined.

The interpretation of depolarizing evolutions is straightforward: at time $t$ each state is mixed with the maximally mixed state $\mathbbm{1}_S/d$ with a ratio given by $f(t)$. The larger $f(t)$ is, the closer is $\rho_S(t)$ to the initial state $\rho_S(0)$. Moreover, this contraction towards the maximally mixed state is symmetric in the state space. Indeed, for any two initial states $ \rho_{S,1}(0)$ and $\rho_{S,2}(0)$ evolving under $\mL$ we have:
\begin{eqnarray}\label{tracedepo}
\!\!\!\!\!\!\!\!\!\!\!{||\rho_{S,1}(t)-\rho_{S,2}(t)||_1}= f(t) {||\rho_{S,1}(0)-\rho_{S,2}(0)||_1}  .
\end{eqnarray}

The intermediate map corresponding to the depolarizing evolution during a generic time interval $[s,t]$ assumes the same form of a depolarizing dynamical map, namely:
\begin{eqnarray}\label{depoint}
\!\!\!\!\!\!\! V_{t,s} (\,\cdot\,) = \frac{f(t)}{f(s)} \, I_S (\,\cdot\,) + \left(1-\frac{f(t)}{f(s)} \right) \tr{ \,\cdot\, }\frac{\mathbbm{1}_S}{d} \, .
\end{eqnarray}
Hence, the CPTP condition  for $V_{t,s}$ coincides with $f(t)/f(s) \in [0, 1]$, that is $f(s) \geq f(t)$ for $s\leq t$. Similarly, the infinitesimal intermediate map $V_{t+\epsilon,t}$ is CPTP if and only if $f'(t)\leq 0$. Indeed, Markovian depolarizing evolutions have non-increasing characteristic functions. 

The Choi state of $V_{t,s}$ is $V_{t,s}\otimes I_S (\phi^+_{SA})= f(t)/f(s) \phi^+_{SA} + (1-f(t)/f(s)) \mathbbm{1}_{SA}/d^2$ and its smallest eigenvalue is $\lambda_{t,s}=(1-f(t)/f(s))/d^2$. Since $V_{t,s}$ is CPTP if and only if $\lambda_{t,s}\geq 0$, thanks to the evaluation of $\lambda_{t,s}$ we are able to obtain $\mathcal P^\Lambda$ and $\mathcal N^\Lambda$, the collection of time pairs $\{s,t\}$ such that $V_{t,s}$ is, respectively, CPTP and non-CPTP (see Eqs.~(\ref{CPTPsub}) and (\ref{nonCPTPsub})).

Non-Markovian depolarizing evolutions have non-monotonic characteristic functions.
An increase of $f(t)$ in a given time interval corresponds to a corresponding non-CPTP intermediate map. Moreover, in the same time interval the trace distance between any two states increases, namely a distinguishability backflow.
The largest distinguishability backflows are provided by initially orthogonal states, for which the trace distance is equal to $2f(t)$ (see Eq.~(\ref{tracedepo})). We consider the flux non-Markovianity measure $M^D$ in case of no ancillary systems (see Eq.~(\ref{measureNMdist})): 
\begin{eqnarray}\nonumber
&& \!\!\!\!\!\!\!\!\!\!\!\!\! M^D(\mL) =  \int_{ \sigma^D>0} \sigma^D(\Lambda_t (\rho_{S,1}),\Lambda_t (\rho_{S,2})) dt \\ \nonumber &&\!\!\!\!\!\!\!\!\!\!\,\,= 2 \int_{f'>0} f'(t) dt  = 2 \sum_i f(t_{fin,i})-f(t_{in,i}) = 2 \Delta  \, ,
\end{eqnarray}
where $\rho_{S,1}$, $\rho_{S,2}$ are any two orthogonal states, $(t_{in,i},t_{fin,i})$ is the $i$-th time interval when $f'(t)>0$ and $\Delta>0$ is the sum of all the revivals of $f(t)$. Finally, the non-Markovianity measure given in Eq.~(\ref{Nmix}) is equal to $M^{mix}(\mL)=\Delta/(1+\Delta)$~\cite{DDSV}.

\subsection{Backflows timing and PNM core}\label{backdepo}

 We are ready to evaluate $\TL$, $\tau^\Lambda$ and $t^\Lambda$. We can rewrite Eqs.~(\ref{TLambdanuovo}), (\ref{taunm}) and (\ref{tnm}) in terms of $f(t)$ and $f'(t)$ as follows:
\begin{eqnarray}
&&\!\!\!\!\!\!\!\!\!\!T^{\Lambda} = \max \left\{ T \,  \left|  \! \!\, 
\begin{array}{ccc}
\mbox{(A)} &\!  \! \! f'(t)\leq 0  & \! \! \! \mbox{ for all }\,    t \leq T,\\ 
\mbox{(B)} & \!  \! \!f(T)\geq f(t) & \!  \! \! \,\mbox{ for all } \, T\leq t,  \\ 
\mbox{(C)} &\!   \! \!f(T)\neq 1   \, & \! \! { T>0.}
\end{array}
\right. \!\!\right\} , \nonumber \\ \label{TLambdadepo}
 \\ 
 \label{taudepo}
&&\!\!\!\!\!\!\!\!\!\! \tau^{\Lambda} = {\inf} \left\{ T \, \left| \, f'(T)>0 \right. \right\},
\\ \label{tdepo}
&&\!\!\!\!\!\!\!\!\!\! t^{\Lambda}= \min \left\{\,T \, \left| \, f(T)=f(\TL) \mbox{ for } T>\TL \right. \right\} ,
\end{eqnarray}
where the last equality holds because Eq.~(\ref{TLambdadepo}) implies that $f(\TL)>f(\TL+\epsilon)$ for infinitesimal $\epsilon>0$. As expected, condition (A) of Eq.~(\ref{TLambdadepo}) implies that $\mL$ behaves as a Markovian depolarizing evolution in the time interval $[0,\TL]$. Secondly, by considering (A) and (B) together, we can state that $f(\TL)\in (0,1)$. As discussed in Section~\ref{nondivisible}, in case $\mL$ is non-invertible and $t^{NB}$ is the earliest time when $f(t^{NB})=0$, we should add to Eq.~(\ref{TLambdadepo}) the constraint $\TL<t^{NB}$. Anyway, as we show in Appendix \ref{noninvdepo}, even without imposing such a constraint, $\TL<t^{NB}$.

Generic non-Markovian evolutions are characterized by $0\leq \TL\leq \tau^\Lambda\leq t^\Lambda$ (see  Eq.~(\ref{order})). Nonetheless, depolarizing evolutions always satisfy $0\leq \TL< \tau^\Lambda< t^\Lambda \leq \infty$. The last equality ($t^\Lambda=\infty$) is achieved when $f(\TL)>f(t)$ for all $t>\TL$ and $\lim_{t\rightarrow \infty} f(t)=f(\TL)$.

We obtain the PNM core of a NNM depolarizing evolution by exploiting the method presented in Section~\ref{simulation}.  Hence, if we apply Eq.~(\ref{PNM}) to the intermediate maps of a NNM depolarizing evolution $\mL$ characterized by $f(t)$, we obtain the PNM depolarizing evolution $\omL$ characterized by
\begin{widetext}

\begin{figure}
\begin{center}
\includegraphics[width=0.75\textwidth]{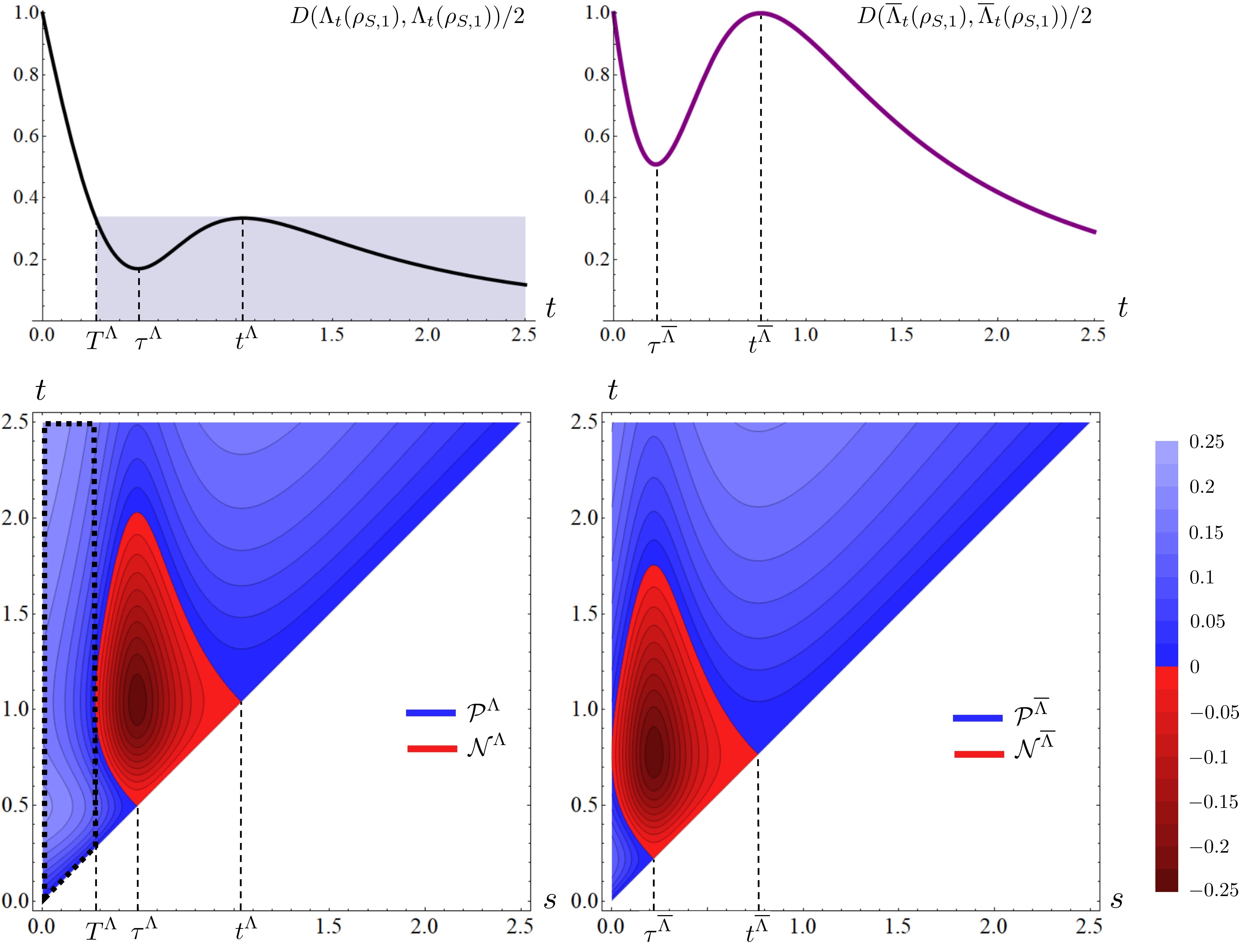} 
\end{center}
\caption{
{\bf Top left:} Trace-distance $D(\Lambda_t(\rho_{S,1}),\Lambda_t(\rho_{S,2}))/2=f(t)$  of two initially orthogonal states evolved by the NNM depolarizing evolution $\mL$ given by $f(t)=(1 - 3 t + 2 t^2 + 2 t^3) / (1 + t^2 + t^3 + 3 t^5)$. 
The evolution in $[0,\TL]$ is a Markovian pre-processing. Indeed, all the states get closer and this noise is not needed for later increases: $f(t)$ cannot increase in time intervals starting before $\TL$.
The NM nature of $f(t)$ is shown immediately after $\TL$: $f(t^\Lambda)-f(s)>0$ for all $s\in(\TL,t^\Lambda)$. The total increase of $f(t)$ is $\Delta=f(t^\Lambda)-f(\tau^\Lambda)\simeq 0.16$.
The first time after which $f(t)$ increases instantaneously is $\tau^\Lambda$.  
The shaded region represents the PNM core damped by the Markovian pre-processing.
{\bf Top right:} $D(\overline\Lambda_t(\rho_{S,1}),\overline\Lambda_t(\rho_{S,2}))/2=\overline f(t)$, where $\rho_{S,1,2}$ are orthogonal and $\omL$ is the PNM core of $\mL$, the depolarizing evolution given by $\overline{f}(t)=f(t+\TL)/f(\TL)$, which increases in time intervals starting immediately after the initial time: $f(t^{\overline\Lambda})-f(s)>0$ for all $s\in (0,t^{\overline\Lambda})$. 
 Since $\overline f(t^{\overline\Lambda})=1$, we have $\overline\Lambda_{t^{\overline\Lambda}}=I_S$: all the initial information is restored. The total increase of $\overline f(t)$ is $\overline\Delta=\overline f(t^{\overline\Lambda})-f(\tau^{\overline\Lambda})=\Delta/f(\TL) \simeq 0.49$.
{\bf Bottom left:} $\mathcal{P}^\Lambda$ and $\mathcal{N}^\Lambda$, the sets of time pairs $\{s,t\}$ when $V_{t,s}$ is respectively CPTP and non-CPTP. Lighter blue (darker red) colours corresponds to larger (larger in modulo)  positive (negative) values of $\lambda_{t,s}=(1-f(t)/f(s))/4$, the lowest eigenvalue of the  Choi operator of $V_{t,s}$. The minimum is obtained for $\lambda_{t^\Lambda,\tau^\Lambda}\simeq-0.241$ ($f(t)$ shows its largest increase in $[\tau^\Lambda, t^\Lambda]$).  Along the $s=t$ line $\lambda_{t,s}=0$ and the adjacent points correspond to the infinitesimal intermediate maps $V_{t+\epsilon,t}$, which are non-CPTP if and only if $f'(t)>0$. Any NNM evolution ($\TL>0$) has $\mathcal P^\Lambda$ that owns all the points  $\{s,t\}$ for $s\leq \TL$ (dotted region): $\TL$ is the largest $T$ such that $\{s,t\}\in \mathcal P^\Lambda$ for all $s\leq T$.
{\bf Bottom right:} $\mathcal{P}^{\overline\Lambda}$ and $\mathcal{N}^{\overline\Lambda}$.
The different shades represent different values of $\overline\lambda_{t,s}$, the lowest eigenvalue of the Choi operator of $\overline V_{t,s}$. Since $\overline V_{t,s}=V_{t+\TL,s+\TL}$, the minimum is $\overline \lambda_{t^{\overline\Lambda},\tau^{\overline\Lambda}}\simeq -0.241$.
}\label{newfigdepo}
\end{figure}
\end{widetext}
$\overline{f}(t)=f(t+\TL)/f(\TL)$. It is a valid characteristic function 
 ($\overline{f}(t)\in [0,1]$ and $\overline{f}(0)=1$) and $T^{\overline\Lambda}=0$. Hence, the the corresponding dynamical maps are:
\begin{eqnarray}\nonumber
&&\!\!\!\!\!\!\!\!\!\! \overline{\Lambda}_t (\,\cdot\,) =  \overline{f}(t) I_S ( \,\cdot\,) + \left(1-\overline{f}(t)\right) \tr{ \,\cdot\, }\frac{\mathbbm{1}_S}{d} 
\\ \nonumber &&\!\!\!\!\!\!\!\!\!\! \,\,\,\,\,= 
\frac{f(t+\TL)}{f(\TL)} I_S ( \,\cdot\,) + \left(1-\frac{f(t+\TL)}{f(\TL)} \right) \tr{ \,\cdot\, }\frac{\mathbbm{1}_S}{d}    \, .
\end{eqnarray}

The NNM evolution $\mL$ can be expressed as a first time interval of Markovian pre-processing, expressed by $\Lambda_t$ for $t\in [0,\TL]$, followed by the action of the PNM evolution $\omL$ (see Eq.~(\ref{NNMevo2})). 
As we explained, $\omL$ is nothing but $\mL$ without the resultant of its Markovian pre-processing $\Lambda_\TL$, which, not only is useless for the appearance of non-Markovian phenomena but damps information backflows.  Indeed, we can apply Corollary \ref{corpropbackflows} and conclude that whenever we can obtain a distinguishability backflow with $\mL$ in a time interval $[s,t]$, we can observe a backflow with $\omL$ in the time interval $[s-\TL,t-\TL]$, where the proportionality factor between the two revivals is $1/f(\TL)>1$. 
As expected, $\omL$ is characterized by larger non-Markovianity measures than $\mL$:
\begin{eqnarray}\label{ND1}
 M^D(\omL)=
 \frac{2\Delta}{f(\TL)}> M^D(\mL) = 2\Delta  \, ,  \hspace{.7cm}
\\
 \label{Nmix1}
M^{mix}(\omL)\, 
=\frac{\Delta}{f(\TL)+\Delta} > M^{mix}(\mL) = \frac{\Delta}{1+\Delta}\, .
\end{eqnarray} 
It can be proven that similar results holds true for the measures $M^{W,max}$ and
$M^{W,av}$ (see Eqs.~(\ref{Nmax}) and (\ref{Nav})).  

We conclude by noticing that all PNM depolarizing evolutions $\omL$ completely retrieve the initial information of the system  at time $t^{\overline \Lambda}$. In particular, all PNM depolarizing evolutions satisfy the conditions of 
Proposition~\ref{propC}, where $\overline{\Lambda}_{t^{\overline \Lambda}}= I_S$. This result follows from the observation that $f(t^\Lambda)= f(\TL)$, and therefore all PNM depolarizing evolutions are such that $\overline f(t^{\overline\Lambda})=\overline f(0)=1$. Notice that $t^{\overline\Lambda}$ may be divergent.

\subsection{Example}\label{newsecexample}
We show how to apply our results to a simple characteristic function $f(t)$ representing a NNM depolarizing evolution $\mL$.
The toy model considered here is given  by $f(t)=(1 - 3 t + 2 t^2 + 2 t^3) / (1 + t^2 + t^3 + 3 t^5) $ (see Figure \ref{newfigdepo}): a continuous function with a single time interval of increase and an infinitesimal asymptotic behaviour.  We start by calculating the times $\TL$, $\tau^\Lambda$ and $t^\Lambda$. Hence, we consider the sets $\mathcal P^\Lambda$ and $\mathcal N^\Lambda$, the sets containing the pairs of times $\{s,t\}$ such that $V_{t,s}$ is respectively CPTP and non-CPTP (see Eqs.~(\ref{CPTPsub}) and~()).  We can obtain these sets by noticing that the smallest eigenvalue $\lambda_{t,s}=(1-f(t)/f(s))/d^2$ of the Choi state of $V_{t,s}$ is non-negative if and only if $V_{t,s}$ is CPTP. The same analysis is performed for the corresponding PNM core $\omL$. 
We start with a technical analysis of $f(t)$. 
Standard numerical methods lead to $\TL\simeq 0.275$, $\tau^\Lambda=0.495$ and $t^\Lambda\simeq1.040$.
It is possible to have increases of $f(t)$ only in time intervals $[s,t]$ starting later than $\TL$. Moreover, as explained by Proposition~\ref{propDgen}, these increases take place for a continuum of initial times:  $f(t^\Lambda)-f(s)>0$ for all $s\in(\TL, t^\Lambda)$.  Instead, if
we consider an initial time $s$ sooner than $\TL$, the characteristic function cannot increase: $f(t)-f(s)<0$ for $s< \TL$ and $s< t$.
The time $\tau^\Lambda$ is the first time after which $f'(t)>0$. Moreover, $f'(t)>0$ only for $t\in(t_{in},t_{fin})=(\tau^\Lambda,t^\Lambda)$, where the total revival is $\Delta=f(t^\Lambda)-f(\tau^\Lambda)\simeq 0.164$.

We now analyse $f(t)$ from the point of view of information backflows. The characteristic function $f(t)$ is directly connected with the time-dependent distinguishability $D(\rho_{S,1}(t),\rho_{S,2}(t))$ of two states evolving under $\mL$ (see Eq.~(\ref{tracedepo})). 
In the first time interval $[0,\TL]$ information is lost and never recovered. Indeed, we called this noise \textit{useless} for non-Markovian phenomena and the resultant noise $\Lambda_{\TL}$ represents a Markovian pre-processing. As discussed above, the damping of the initial Markovian pre-processing is quantified by $f(\TL)\simeq 0.334 $. 
In the time interval $[\TL,\tau^\Lambda]$ the system keeps losing information. 
Differently from the noise in $[0,\TL]$, this noise is \textit{essential} for the following non-Markovian phenomena. Indeed, we have increases $f(t^\Lambda)-f(s)>0$ for all the intervals $[s,t^\Lambda]$ with $s\in (\TL,\tau^\Lambda)$. 
The maximum information backflow is obtained in $[\tau^\Lambda,t^\Lambda]$, when the system recovers information from the environment at all times ($f'(t)>0$). Moreover, at time $t^\Lambda$, the system goes back to the state assumed at time $\TL$ ($f(t^\Lambda)=f(\TL)$), namely when useless noise ended and the essential noise started.

The characteristic function  of the corresponding PNM core $\omL$ is $\overline{f}(t)=f(t+\TL)/f(\TL)$ (see Figure \ref{newfigdepo}). We use Eq.~(\ref{tLoL}) and get the characteristic times $\tau^{\overline \Lambda}\simeq 0.220$ and $t^{\overline \Lambda}\simeq 0.765$ ($T^{\overline\Lambda}=0$ because $\omL$ is PNM).
The total increase of $\overline f (t)$ is $\overline \Delta= \Delta/f(\TL)\simeq 0.491$.
If we compare the non-Markovian effects of $\mL$ and $\omL$, any distinguishability backflow is amplified by a factor $1/f(\TL)\simeq 2.990$ (see Corollary \ref{corpropbackflows}) and through Eqs.~(\ref{ND1}) and (\ref{Nmix1}) we can evaluate the values of the corresponding non-Markovianity measures: $M^D(\omL)\simeq 0.983>M^D(\mL)\simeq 0.328$ and $M^{mix}(\omL)\simeq 0.329>M^{mix}(\mL)\simeq 0.141$.

The main qualitative difference between $\mL$ and the corresponding PNM core $\omL$ is the presence of a time when all the initial information is recovered.  If the system is evolved by $\omL$, any possible type of information is \textit{completely recovered} to its original value at time $t^{\overline\Lambda}$. Indeed, $\overline{f}(t^{\overline\Lambda})=1$ and the dynamical map at this time is equal to the identity, namely 
 $\Lambda_{t^{\overline\Lambda}}=I_S$. For instance, any pair of 
initially orthogonal states $\{\overline\Lambda_t (\rho_{S,1}), \overline\Lambda_t (\rho_{S,2})\}$ goes from being perfectly distinguishable, to non-perfectly distinguishable for any $t\in(0,t^{\overline \Lambda})$ and then back to perfectly distinguishable at time $t^{\overline\Lambda}$. As noticed above, all PNM depolarizing evolutions completely restore the initial information content of the system at time $t^{\overline\Lambda}$, namely $\overline f(t^{\overline\Lambda})=1$ for all PNM depolarizing $\omL$. 
 Finally, we can see how the initial noise in this dynamics is essential for the following non-Markovian phenomena to happen. Indeed, as soon as we take a non-zero time $s\in(0,t^\Lambda)$, we have a distinguishability backflow in the time interval $[s,t^\Lambda]$.

\section{Quasi-eternal non-Markovianity}\label{eternality}

We briefly introduce a qubit model to show the existence of evolutions with $\TL <\tau^\Lambda<t^\Lambda=\infty $ and $\TL =\tau^\Lambda=t^\Lambda$. The example dynamics are taken from the family of quasi-eternal non-Markovian evolutions~\cite{DDSlong}, which generalize the well-known qubit eternal non-Markovian model~\cite{eternal0,eternal,eternal1}. 
First, we define Pauli evolutions as those having dynamical maps with the following form: 
\begin{eqnarray}\nonumber
\Lambda_t (\,\cdot\,) = \sum_{i=0,x,y,z} p_i(t) \sigma_i \, (\,\cdot\,) \, \sigma_i \, ,
\end{eqnarray}
where $\sigma_{x,y,z}$ are the Pauli operators, $\sigma_0=\mathbbm{1}$, and $p_0(t)=1-p_x(t)-p_y(t)-p_z(t)$. The Pauli map is CPTP if and only if $p_{0,x,y,z}(t)\geq 0$.
The easiest way to appreciate the non-Markovian features of Pauli evolutions is given by studying the corresponding master equation, namely the first-order differential equation defining the evolution of the corresponding system density matrix:
\begin{eqnarray}\label{master}
\frac{d}{dt} \rho_S(t)= \sum_{i=x,y,z} \gamma_i(t) (\sigma_i \rho_S(t) \sigma_i -\rho_S(t)) \, , 
\end{eqnarray}
where $\gamma_i(t)$ are time-dependent real functions. 
It can be proven  that $\gamma_i(t)\geq 0$ for all $i=x,y,z$ and $t\geq 0$ if and only if the corresponding evolution $\mL$ is Markovian~\cite{darekk}. 
Moreover, if $\gamma_i(t)+\gamma_j(t)\geq 0$ for all $i\neq j$ and $t\geq 0$, the 
evolution is\hspace{1pt} P-divisible, namely \hspace{1pt}$V_{t,s}$\hspace{1pt} is\hspace{1pt} at  least P (but not
\begin{widetext}

\begin{figure}
\includegraphics[width=0.999\textwidth]{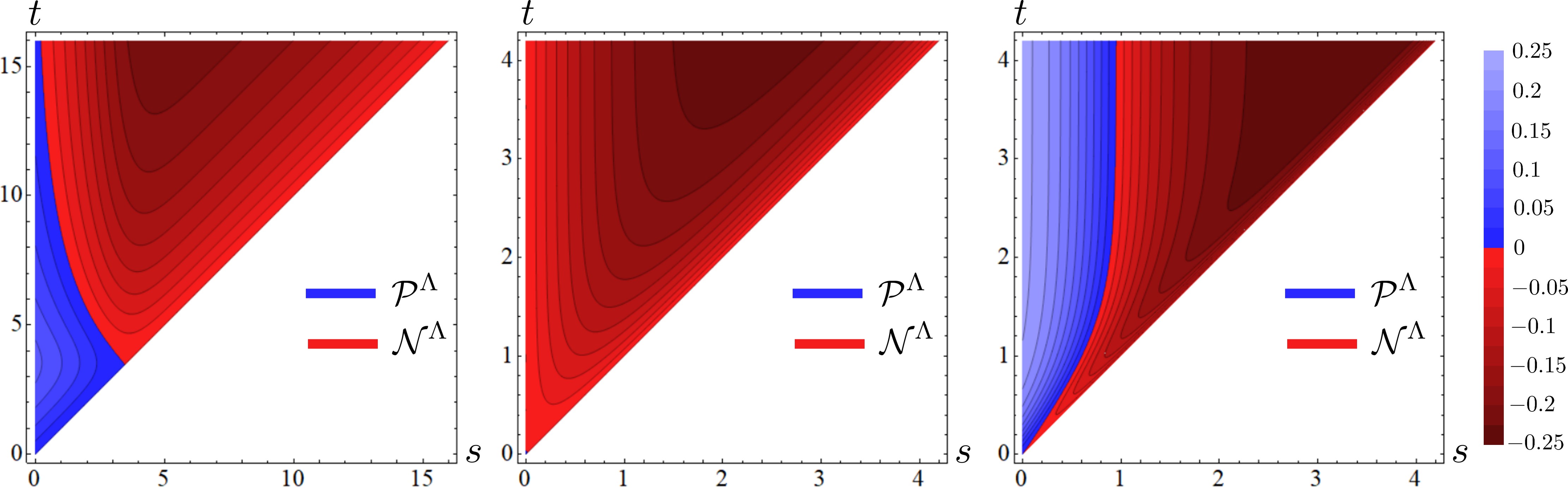}
\caption{$\mathcal P^\Lambda$ and $\mathcal N^\Lambda$ for quasi-eternal PNM evolutions obtained with different $\alpha>0$ and $t_{0}=t_{0,\alpha}$. The pair $\{s,t\}$ belongs to $\mathcal{P}^\Lambda$ ($\mathcal{N}^\Lambda$) when $V_{t,s}$ is CPTP (non-CPTP). 
The lines $\{0,t\}_{t\geq 0}$ and $\{t,t\}_{t\geq 0}$ always belong to $\mathcal P^\Lambda$. The border of $\mathcal N^\Lambda$ contains a pair $\{s,t\}$ along the diagonal $s=t$ if and only if $\gamma_z(t)<0$.
Lighter blue (darker red) colours corresponds to larger (larger in modulo)  positive (negative) values of $\lambda_{t,s}=p_z(s,t)$, the lowest eigenvalue of the Choi operator of $V_{t,s}$.
{\bf Left:} $\alpha=0.1$, $t_{0}=t_{0,\alpha}\simeq 3.465$ and $0=\TL<\tau^\Lambda=t_0<t^\Lambda=\infty$. Infinitesimal intermediate maps $V_{t+\epsilon,t}$ are non-CPTP for all  $t>\tau^\Lambda$, while $V_{t,s}$ is non-CPTP for infinitesimal $s$ and diverging $t$. Indeed, $\TL=0$ and $t^\Lambda=\infty$.  Only property (i) from Corollary \ref{corollaryPNM} is verified.
{\bf Center:}  $\alpha=1$, $t_{0}=t_{0,\alpha}=0$ and $0=\TL=\tau^\Lambda=t^\Lambda$. 
Any intermediate map $V_{t,s}$ with $0<s<t$ is non-CPTP.  Both properties (i) and (ii) from Corollary \ref{corollaryPNM}  are verified.
{\bf Right:}  $\alpha=5$, $t_{0}=t_{0,\alpha}=0$ and  $0=\TL=\tau^\Lambda=t^\Lambda$. We have $\TL=0$ as $V_{t,\epsilon}$ is non-CPTP for infinitesimal $\epsilon$ and $t\in (0,T_\epsilon)$, where $\lim_{\epsilon\rightarrow 0} T_\epsilon \rightarrow 0$. Only property (ii) of Corollary \ref{corollaryPNM} is verified.
 For $s\lesssim 0.9$, there are $s<t_1<t_2$ such that $V_{t_2,s}=V_{t_2,t_1}\circ V_{t_1,s}$ is CPTP, while $V_{t_2,t_1}$ and $V_{t_1,s}$ are not CPTP. 
}
\label{quasieternal}
\end{figure}
\end{widetext}  
 necessarily CP) for all $s\leq t$. 
The probabilities and the rates that define the quasi-eternal model are:
\begin{eqnarray}\nonumber
 &&\!\!\!\!\!\!\!\!\!\!\!\!\!\!\!\!\!\!\! p_{x}(t)=p_y(t)= \frac{1-\expp{-2 \alpha t}}{4}  \, , \\
\label{probs}
&&\!\!\!\!\!\!\!\!\!\!\!\!\!\!\!\!\!\!\!  p_z(t)=\frac{1}{4}	\left( 1 + \expp{-2 \alpha t} - \frac{2 \expp{-\alpha t} \, \mbox{cosh}^{\alpha} (t-t_0)}{\mbox{cosh}^{\alpha} (t_0)}\right) \,  ,
\\ \label{rates}
&&\!\!\!\!\!\!\!\!\!\!\!\!\!\!\!\!\!\!\!  \left\{\gamma_x(t),\gamma_y(t),\gamma_z(t) \right\}= \frac{\alpha}{2} \{1,1,-\mbox{tanh}(t-t_0)\}  ,  
\end{eqnarray}
where these time-dependent parameters generate maps $\Lambda_t$ that are CPTP at all times if and only if $\alpha>0$ and 
$$t_{0}\geq t_{0,\alpha}=\max\{0,(\log(2^{1/\alpha} -1))/2\}, $$
where $t_{0,\alpha}>0$ for $\alpha\in(0,1)$ and $t_{0,\alpha}=0$ for $\alpha\geq 1$~\cite{DDSlong}. 
We 
call quasi-eternal non-Markovian the Pauli  evolutions defined by the probabilities (\ref{probs}), or equivalently the solution of the
master equation (\ref{master}) with rates (\ref{rates}), where $t_0\geq t_{0,\alpha}$.  These evolutions are P-divisible and, since $\gamma_z(t)<0$ for $t>t_{0}$, the infinitesimal intermediate maps are non-CPTP for all $t>t_0$.

The intermediate map of a Pauli evolution assumes the Pauli form $V_{t,s}(\,\cdot\,) = \sum_{i=0,x,y,z} p_i(s,t)  \sigma_i (\,\cdot\,) \sigma_i $, where:
\begin{eqnarray}\nonumber
&&\!\!\!\!\!\!\!\!\!\!p_{x}(s,t)=p_y(s,t)= \frac{1-\expp{-2 \alpha (t-s)}}{4}  \, ,
\\ \nonumber
&&\!\!\!\!\!\!\!\!\!\! p_z(s,t)=\frac{1}{4} + \frac{\expp{-2 \alpha (t-s)}}{4} - \frac{2 \expp{-\alpha (t-s)}  \, \mbox{cosh}^{\alpha} (t-t_0)}{4 \,\mbox{cosh}^{\alpha} (s-t_0)}    \, .
\end{eqnarray}
Notice that, as any Pauli channel, the intermediate map $V_{t,s}$ is CPTP if and only if $p_{0,x,y,z}(s,t)\geq 0$.
The lowest eigenvalue of the Choi state of $V_{t,s}$ is $\lambda_{t,s}=p_z(s,t)$. 
In Figure \ref{quasieternal} we represent $\mathcal P^{\Lambda}$ and $\mathcal N^\Lambda$ for three PNM evolutions from this family, namely the collection of time-pairs $\{s,t\}$ such that $V_{t,s}$ is respectively CPTP and non-CPTP. We see that for $\alpha\in(0,1)$ we have $\TL<\tau^\Lambda=t^\Lambda$, while for $\alpha\geq 1$ we have  $\TL=\tau^\Lambda=t^\Lambda$.

We prove that $\TL=t_0-t_{0,\alpha}$  and $\tau^\Lambda = t_0$ (see Appendix~\ref{TLalpha}).
The latter result is a direct consequence of the form of the master equation, which has negative rates if and only if $t>t_0$. Indeed, $V_{t+\epsilon,t}$ is CPTP for infinitesimal $\epsilon$ if and only if $\gamma_{x,y,z}(t)\geq 0$.

Interestingly, we can appreciate a peculiar scenario for $\alpha>1$, where we obtain a CPTP map through the composition of non-CPTP maps. Without loss of generality, we fix $t_0=t_{0,\alpha}=0$. There exist initial times $s'>0$ such that, $V_{t,s'}$ is non-CPTP for all $t\in(s,t')$, while $V_{t,s'}$ is CPTP for all $t\geq t'$. Notice that, since $\gamma_z(t)< 0$ for all $t>0$, $V_{t+\epsilon,t}$ is non-CPTP for infinitesimal $\epsilon>0$ and all $t> 0$. Therefore, if we consider $t_1<t'<t_2$, we have that $V_{t_1,s}$ is non-CPTP and $V_{t_2,s}$ is CPTP. The latter map can be obtained via the composition of $V_{t_1,s}$ with infinitesimal intermediate maps as follows $V_{t_2,s}=V_{t_2 , t_2 - \epsilon} \circ \dots \circ V_{t_1+\epsilon,t_1} \circ V_{t_1,s}$, namely the CPTP map $V_{t_2,s}$
is obtained by composing infinitesimal non-CPTP maps $V_{t+\epsilon,t}$
with the non-CPTP intermediate map $V_{t_1,s}$. 
The composition of infinitesimal intermediate maps that we wrote corresponds to $V_{t_2,t_1}=V_{t_2 , t_2 - \epsilon} \circ \dots \circ V_{t_1+\epsilon,t_1}$, which, depending on $t_1$ and $t_2$, can be either CPTP or not.

Finally, a simple variation of this model leads to a trivial example of $\TL< \tau^\Lambda = t^\Lambda$, where we exploit condition (C) of Eq.~(\ref{TLambdanuovo}). 
Consider an evolution that is unitary in an initial time interval, namely $\Lambda_t=U_t$ is unitary for $t\in [0,t_U]$, and later it behaves as an eternal PNM evolution with $\alpha>1$ and $t_0=0$. 
Such an evolution would, for instance, be given by integrating Eq. (\ref{master}) with $\{\gamma_x(t),\gamma_y(t),\gamma_z(t) \} = \theta(t-t_U) \{1,1,-\mbox{tanh}(t-t_U)\} $, where $\theta(x)=1$ for $x\geq 0$ and it is zero-valued otherwise. Indeed, in $[0,t_U]$ the evolution would correspond to the identity and $0=\TL<\tau^\Lambda=t^\Lambda=t_U$.

\section{Discussion} 
 We studied the difference between two types of initial noise in non-Markovian evolutions,  where essential noise makes the system lose the same information that takes part during later backflows, while the information lost with useless noise is never recovered. Indeed, this last type of noise can be compared to a Markovian pre-processing of the system.
We identified as PNM those evolutions showing only essential noise, while NNM evolutions have both type of noises. We proved that any NNM evolution can be simulated as a Markovian pre-processing, which generates the useless noise, followed by a PNM evolution, which represents the (pure) non-Markovian core of the evolution.
 In order to distinguish between PNM and NNM, we introduced a temporal framework that aims to describe the timing of fundamental non-Markovian phenomena. We identified the most distinguishable classes arising from this framework, where PNM and NNM evolutions fit naturally. Moreover, several mathematical features connected with this classification have been identified.
 
Later, we focused on the phenomenological side of this topic. Indeed, we addressed the problem of finding which backflows and non-Markovian measures are amplified when PNM evolutions are compared with their corresponding noisy versions, proposing constructive and measurable results within the context of state distinguishability. 
We studied how the entanglement breaking property is lost/preserved when we compare PNM cores and their corresponding NNM evolutions. 
Moreover, we discussed the possibility to activate correlation backflows when we extract the PNM core out of NNM evolutions.
Through several examples we showed how to extract PNM cores, clarified the possible scenarios concerning the timings of non-Markovian phenomena and explained why useless noise has the only role of suppressing the backflows generated by the PNM core.
Finally, it would be interesting to study a further classification that distinguishes between the classical and the quantum content of useless noise, essential noise and, more importantly,  information backflows.

Some dynamical models, such as dephasing and amplitude damping, have PNM cores that go from being non-unitary to unitary, i.e., satisfy Proposition~\ref{propC}. Nonetheless, not all PNM evolutions satisfy this property (see   Section~\ref{secfeatures}). It would be interesting to study which are the minimal conditions under which a given class of evolutions has PNM cores satisfying Proposition~\ref{propC}. A reasonable class could be given by the one-parameter evolutions, as described in Ref.~\cite{DDSlong}, namely those with a single rate in the corresponding Lindblad master equation. More in general, concerning the possibility to lose and completely recover some type of information, it is crucial to understand whether PNM evolutions always enjoy this property, i.e., \textit{``If an evolution is PNM, there exist an initialization that during the dynamics loses and then completely retrieves the information content for at least one quantifier''}.

We generalized the definition of $T^\Lambda$ to non-divisible dynamics. We believe that the extraction of the PNM core of most of the well-known non-divisible models can be obtained within this framework (see Appendix~\ref{noninvdepo} for the study of a non-divisible depolarizing model). Nonetheless, it would be interesting to understand whether and how a PNM core can be extracted when $T^\Lambda =T^{ND}$, with $T^{ND}$ being the non-divisibility starting time. For instance, image non-increasing and kernel non-decreasing NNM evolutions $\mL$ may lead to this exotic scenario \cite{continuity}. If $\mL$ shows non-Markovian effects after a dimensionality reduction of the state space, the PNM core extraction would acquire a more abstract meaning. Indeed, we may require $\omL$ to act on a space with a higher dimensionality if compared to the space of states target of the non-Markovian effects of $\mL$.

We analysed when and to what extent distinguishability backflows are amplified by PNM cores. Moreover, we gave a constructive method to build the states that provide the largest backflows. It would be interesting to understand whether this approach can be generalized to other quantifiers, e.g., distinguishability of state ensembles~\cite{BD}, Fisher information~\cite{Fisher0,Fisher} and correlations~\cite{DDSlong,Janek,DDSDONATO}.
Another interesting topic would be to understand whether PNM evolutions can lead to the activation of other non-Markovian phenomena, as discussed in the context of correlation backflows.

We saw that PNM evolutions have non-Markovianity measures that cannot be smaller than the associated NNM evolutions in Section~\ref{measures}.
Moreover, we gave conditions under which PNM evolutions have strictly larger distinguishability measures. Understanding in which other cases and to what extent this strict inequality can be obtained with other information quantifiers and other non-Markovianity measures is interesting.

\section*{Acknowledgements}

The author would like to thank Antonio Ac{\'i}n and Giulio De Santis for illuminating discussions. 
This work was supported by 
the research project ``Dynamics and Information Research Institute - Quantum Information, Quantum Technologies'' within the agreement between UniCredit Bank and Scuola Normale Superiore di Pisa (CI14\_UNICREDIT\_MARMI),
the Spanish Government (FIS2020-TRANQI and Severo Ochoa
CEX2019-000910-S), the ERC AdG CERQUTE, the AXA
Chair in Quantum Information Science, Fundacio Cellex,
Fundacio Mir-Puig and Generalitat de Catalunya (CERCA,
AGAUR SGR 1381).

\bibliographystyle{quantum}
\bibliography{bib}

\appendix
\widetext

\section{Proof that $\TL$ is given by a maximum and not a supremum}\label{proofclosed}

Given the intermediate map $V_{t,s}$ of a given evolution $\mathbf{\Lambda}$,  we call $\lambda_{t,s}$ the lowest eigenvalue of the corresponding operator given by the Choi-Jamiołkowski isomorphism. Hence, we have that $V_{t,s}$ is CPTP if and only if $\lambda_{t,s}\geq 0$. We call $\Omega_T \subseteq \mathbb{R}^2$ the closed set of pairs of times $\{s,t\}$ such that $s\leq t$ and $s\in [0,T]$. We can rewrite Eq.~(\ref{TLambdanuovo}) as
\begin{equation}\label{TLN}
T^\Lambda=\sup\{\,T \,|\, \mbox{(A+B) } \,\, \lambda_{t,s}\geq 0 \,\,\,\mbox{for all } \{s,t\}\in \Omega_T\},
\end{equation}
where for now we replaced the maximum with the supremum and removed condition (C). Now, we prove that this supremum is indeed a maximum. We call $\mathcal P^\Lambda=\lambda^{-1}_{t,s}([0,\infty))\subseteq \mathbb{R}^2_\leq$ the set of pairs of times such that $\lambda_{t,s}\geq 0$. Being $V_{t,s}$ continuous in $\{s,t\}$, so it is $\lambda_{t,s}$. We notice that, since $\lambda_{t,s}$ is continuous and $\mathcal P^\Lambda\lambda^{-1}_{t,s}([0,\infty))$ is the preimage of the closed set $[0,\infty)$, $\mathcal P^\Lambda=\lambda^{-1}_{t,s}([0,\infty))$ is a closed subset of $\mathbb{R}^2$. Hence, $\mathcal N^\Lambda=\lambda^{-1}_{t,s}((-\infty,0))$ is an open subset of $\{s,t\}_{0\leq s\leq t}$.

Now we prove that Eq.~(\ref{TLN}) is a maximum, namely that $\lambda_{t,s}\geq 0 \,\,\,\mbox{for all } \{s,t\}\in \Omega_{T^\Lambda}$. Suppose that $\lambda_{t,s}$ is not positive semidefinite for all $ \{s,t\}\in \Omega_{T^\Lambda}$. Hence, there must exists a pair of times $\{ \tilde s,\tilde t \}\in \Omega_{T^\Lambda}$ such that   $\lambda_{\tilde t,\tilde s}<0$. Since $\lambda_{t,s}$ is continuous and $\lambda^{-1}_{t,s}((-\infty,0))$ is open, there must exists a small enough $\epsilon>0$ such that $\lambda_{\tilde t,\tilde s-\epsilon}<0$. It follows that $\lambda_{t,s}$ is not positive semidefinite for all $\{s,t\} \in \Omega_{T^\Lambda - \epsilon}$ and therefore we have a contradiction with $T^\Lambda$ being the supremum time $T$ for which $\lambda_{t,s}\geq 0 \mbox{ for all } \{s,t\}\in \Omega_T$. Hence, $\lambda_{t,s}\geq 0$ for all $\{t,s\}\in\Omega_{T^{\Lambda}}$ and Eq.~(\ref{TLN}) is a maximum. Finally, the addition of condition (C) in Eq.~(\ref{TLambdanuovo}) to Eq.~(\ref{TLN}) does not change this result.

\section{Proofs of Lemma~  \ref{ABviolation}, \ref{lemma4} and Proposition~\ref{propDgen}}\label{lemma123}

\begin{proof}[Proof of Lemma~\ref{ABviolation}]
First we prove the first sentence. Since $[0,T]$ is included in $[0,\TL]$, if (A) is satisfied in $[0,\TL]$, then it is also satisfied in $[0,T]$ for $T\leq \TL$. Concerning condition (B) for $T\leq \TL$, we have to distinguish two situations. If $t\in(T,\TL]$, then $V_{t,T}$ is CPTP because the evolution is CP-divisible in $[0,\TL]$. If $t>\TL$, we can write $V_{t,T}=V_{t,\TL}\circ V_{\TL,T}$, which is CPTP as it is the composition of two CPTP maps: $V_{\TL,T}$ is CPTP because the evolution is CP-divisible in $[0,\TL]$ and $V_{t,\TL}$ is CPTP because condition (B) is satisfied for $T=\TL$.
Finally, we obtain the proof of the second sentence by considering that a violation of condition (A) implies that $V_{t,s}$ is not CPTP for some $s<t\leq T$. Hence, (B) is violated at time $s<T$.
\end{proof}

\begin{proof}[Proof of Lemma~\ref{lemma4}]
This result is a direct consequence of Lemma~\ref{ABviolation}. Indeed, for infinitesimal values of $\epsilon>0$, we can say that the condition (B) is violated at time $\TL+\epsilon$. This is the earliest time when this property holds true because, given the result of Lemma~\ref{ABviolation}, if $V_{t,s}$ wouldn't be CPTP for $s\leq \TL$, then Eq.~(\ref{TLambdanuovo}) would not correspond to the maximum time when (A) and (B) are satisfied. 

We prove the last sentence. $\TL\leq t$ otherwise (A) would not be satisfied. Now we show that $\TL\in [s,t)$ leads to a contradiction. We express the non-CPTP intermediate map through the following composition $V_{t,s}=V_{t,\TL}\circ V_{\TL,s}$. The r.h.s. of this equation is CPTP as it is the composition of two CPTP maps: $V_{\TL,s}$ is CPTP for condition (A) and   $V_{t,\TL}$ is CPTP for condition (B). Since the l.h.s. is not CPTP, we have a contradiction: either $V_{\TL,s}$, $V_{t,\TL}$ or both are not CPTP and therefore either (A), (B) or both cannot be satisfied for $T \in [s,t)$.
\end{proof}

\begin{proof}[Proof of Proposition~\ref{propDgen}]
We start with the case $\TL<t^\Lambda$.
We write $V_{t^\Lambda,\TL+\epsilon}=V_{t^\Lambda,s}\circ V_{s,\TL+\epsilon}$, where $V_{t^\Lambda,\TL+\epsilon}$ is not CPTP  and $V_{s,\TL+\epsilon}$ is CPTP because $t^\Lambda$ is the earliest time for which the intermediate map starting from $\TL+\epsilon$ is not CPTP (see Lemma~\ref{lemma4} and Eq.~(\ref{tnm})). Since the composition of two CPTP maps is CPTP, $V_{t^\Lambda,s}$ cannot be CPTP, otherwise $V_{t^\Lambda,\TL+\epsilon}$ would be CPTP. 


Consider now $\TL=t^\Lambda$. We want to prove that for all $T>\TL$, the infinitesimal intermediate map  $V_{t+\epsilon,t}$ is non-CPTP for infinitesimal $\epsilon>0$ and infinite times $t$ inside $ (\TL,T)$. Take a $T>\TL$ small enough such that either there are (i) infinite times $t\in(\TL,T)$ when $V_{t+\epsilon,t}$ is non-CPTP or (ii) there are zero. The case of finite times can be avoided by simply considering $T$ (greater but) close enough to $\TL$. We now  analyse case (ii). Since $\TL=t^\Lambda$, the point $\{\TL,\TL\}$ must belong to the border of $\mathcal N^\Lambda$, and therefore there exist a continuum of $\{s,t\}$ infinitesimally close to $\{\TL,\TL\}$ which belong to $\mathcal N^\Lambda$. We remind that $\mathcal N^\Lambda$, the collection of pairs $\{s,t\}$ such that $V_{t,s}$ is non-CPTP, is an open set. Consider a non-CPTP $V_{t_2,t_1}$ such that $\TL<t_1<t_2<T$ and write it as the composition of a large number of infinitesimal intermediate maps. We obtain $V_{t_2,t_1}=V_{t_2,t_2-\epsilon}\circ V_{t_2-\epsilon,t_2-2\epsilon}\circ \dots \circ V_{t_1+\epsilon,t_1}$, where the l.h.s. is non-CPTP while, for small enough $\epsilon>0$, the components on the r.h.s. are all CPTP. The last statement is a consequence of the definition of case (ii). Since the composition of CPTP maps is CPTP, we have a contradiction and only the scenario (i) can take place. 

A pathological case with $0=\TL=t^\Lambda$ such that the times $t$ when $V_{t+\epsilon,t}$ is non-CPTP do not contain a whole interval $(\TL,T)$ where this property is verified is given by the master equation (\ref{master}) with the rates: $\{\gamma_x(t),\gamma_y(t),\gamma_z(t) \} = \{1,1,-\sin(1/t) \mbox{tanh}(t)\}$. 
This is the case because $V_{t+\epsilon,t}$ is non-CPTP at time $t$ if and only if $\gamma_z(t)<0$ and $-\sin(1/t) \mbox{tanh}(t)$ has not got a definite sign in any time interval $(0,T)$.


\end{proof}

\section{Proof that $\TL\leq \tau^\Lambda\leq t^\Lambda$ and more}\label{tlambdavari}

The time $\TL$ cannot be larger than $\tau^\Lambda$ or we would have a violation of condition (A) from Eq.~(\ref{TLambdanuovo}). Since $\tau^{\Lambda}$ is defined through the infimum, $\TL$ and $\tau^\Lambda$ may coincide, but in this case $V_{\tau^{\Lambda}+\epsilon,\tau^{\Lambda}}=V_{\TL+\epsilon,\TL}$ has to be CPTP for all $\epsilon > 0$, otherwise the condition (B) for $\TL$ would be violated. Hence, when $\TL=\tau^\Lambda$, Eq.~(\ref{tnm}) is not a minimum.

Now we prove $\tau^\Lambda\leq t^\Lambda$ by showing that a violation of this inequality leads to a contradiction. If $t^\Lambda< \tau^\Lambda$, we would have that $V_{t^\Lambda,\TL+\epsilon}$  is not CPTP while at the same time $V_{\tau^\Lambda+\epsilon,\tau^\Lambda}$ should be the earliest non-CPTP map for an infinitesimal time interval. These two statements are in contradiction because, if $V_{t^\Lambda,\TL+\epsilon}$ is not CPTP, there must be an infinitesimal time interval $[t_1,t_1+\epsilon]$ contained in $[\TL+\epsilon,t^\Lambda]$ such that $V_{t_1+\epsilon,t_1}$ is not CPTP (see below). Hence, in this case we would have $T^\Lambda+\epsilon\leq t_1\leq t^\Lambda<\tau^\Lambda$. This contradicts Eq.~(\ref{taunm}), which defines $\tau^\Lambda$ as the earliest time $t$ for non-CPTP $V_{t+\epsilon,t}$.

Hence, we only need to prove that if $[s,t]$ is a time interval where $V_{t,s}$ is not CPTP, then there exists an infinitesimal time interval $[t_1,t_1+\epsilon]$ such that $V_{t_1+\epsilon,t_1}$ is not CPTP. For any $\epsilon>0$, we can split $[s,t]$ in subintervals of width $\epsilon$ and consider the composition $V_{t,s}=V_{t,t-\epsilon}\circ V_{t-\epsilon,t-2\epsilon}\circ \dots \circ V_{s+\epsilon,s}$. Since the composition of CPTP maps is CPTP, if $V_{t,s}$ is not CPTP there must be at least one infinitesimal subinterval $[t_1,t_1+\epsilon]$ such that $V_{t_1+\epsilon,t_1}$ is not CPTP.

Now, we prove that $T^\Lambda=\tau^\Lambda$ implies  $\TL=\tau^\Lambda=t^\Lambda$. 
 From the definition of $\tau^\Lambda$ given in Eq.~(\ref{taunm}), it follows that there exists $\overline \delta>0$ such that for all $\delta \in (0,\overline \delta)$, the intermediate map $V_{T^\Lambda+\delta+\epsilon,T^\Lambda+\delta}$ is not CPTP for all $\epsilon \in (0,\epsilon^{(\delta)})$, where $\epsilon^{(\delta)}>0$ depends on $\delta$. On the other hand, $t^{\Lambda,\delta}=\inf\{T\ | V_{T,T^\Lambda+\delta} \mbox{ is not CPTP}\}$. Hence, this infimum is given by $T=T^\Lambda+\delta$, and therefore $t^\Lambda=\TL$.

\section{Proof of Proposition~\ref{propC}}\label{proofpropC}

We start by considering the case where $\mathbf{\Lambda}$ is not characterized by an initial time interval $[0,\delta)$ when the corresponding dynamical maps are all unitary.
$\Lambda_s$ is not unitary for  $s\in (0,\delta)$ and $\Lambda_{t}=U$ is unitary for some ${t}\geq \delta$. We can write the intermediate map starting from an infinitesimal time to $t$ as $V_{{t},\epsilon}=U \circ \Lambda_\epsilon^{-1}$, which is not CPTP for all $\epsilon\in (0,\delta)$. Indeed, since $U$ is unitary, $V_{{t},\epsilon}$ and $\Lambda_\epsilon^{-1}$ have the same eigenvalues and $\Lambda_\epsilon^{-1}$ is non-CPTP because it is the inverse of a non-unitary CPTP map.
It follows that $T^\Lambda=0$. 
Suppose now that $\mathbf{\Lambda}$ is unitary in $[0,\delta)$. Since we assumed that $\Lambda_{s}$ is not unitary for some $s<t$, there exists at least one finite time interval $(\delta, \delta')$ such that  $\Lambda_{s}$ is not unitary for  $s\in(\delta, \delta')$, where  $\delta'\leq t$. Given the conditions (B) and (C) of Eq.~(\ref{TLambdanuovo}), we have to check whether $V_{t,s}$ is CPTP for $s=\delta+\epsilon$ with infinitesimal $\epsilon$. Hence, if we write $V_{t,\delta+\epsilon}=\Lambda_{t}\circ \Lambda^{-1}_{\delta+\epsilon} = U\circ \Lambda^{-1}_{\delta+\epsilon}$ cannot be CPTP because the inverse of a CPTP non-unitary map, namely $\Lambda^{-1}_{\delta+\epsilon}$, is not CPTP and its composition with a unitary transformation $\Lambda_{t}=U$ has the same eigenvalues as $\Lambda^{-1}_{\delta+\epsilon}$. Hence, $V_{t,\delta+\epsilon}$ is not CPTP and $T^\Lambda=0$. 
Finally, we have to prove that there exists at least one intermediate map which is not even positive (P). We call vol$(\Lambda_{t})$ the volume of the image of the evolution at time $t$. Since $\Lambda_{s}$ is CPTP and not unitary, vol$(\Lambda_{s})<$vol$(\Lambda_{0})$~\cite{volume}.
It follows that vol$(\Lambda_{0})$=vol$(\Lambda_{t})<$vol$(\Lambda_{s})$, the intermediate map $V_{t,s}$ cannot be positive~\cite{dividingqm}.

\section{Proof that $M^{mix}(\mL)\leq M^{mix}(\omL)$}\label{appNmix}

Let's say that $M^{mix}(\overline{\mathbf{\Lambda}}) = \overline p$, $\overline{\mathbf{\Lambda}}$ is the PNM core of ${\mathbf{\Lambda}}$  
and 
$
\overline{\mathbf \Lambda}^{mix} = (1-\overline p)\overline{\mathbf{\Lambda}}+\overline p \overline{\mathbf{\Gamma}}^M
$
is Markovian, namely $\overline{\mathbf{\Gamma}}^M$ is optimal to make $\overline{\mathbf{\Lambda}}$ Markovian. The intermediate maps of this evolution are CPTP and read
\begin{eqnarray}\label{incohV}
\overline{V}^{mix}_{t,s }=  \left( (1-\overline p)\overline{{\Lambda}}_t+\overline p \overline{{\Gamma}}^M_t \right) \circ \left( (1-\overline p)\overline{{\Lambda}}_s+\overline p \overline{{\Gamma}}^M_s \right)^{-1} \, .
\end{eqnarray} 
Now we take the Markovian evolution $\mathbf{\Gamma}^M$ and consider $\mathbf{\Lambda}^{mix} = (1-\overline p){\mathbf{\Lambda}}+\overline p {\mathbf{\Gamma}}^M$, where:
\begin{eqnarray}\nonumber 
&{\mathbf{\Gamma}}^M=\left\{
\begin{array}{cc}
\Lambda_t & t< T^{\Lambda} \\
\overline{{\Gamma}}_{t-T^\Lambda} \circ {{\Lambda}}_{T^\Lambda} & t\geq  T^{\Lambda} 
\end{array}  \right.  \, ,& \\ 
\nonumber 
&\mathbf{\Lambda}^{mix} = (1-\overline p){\mathbf{\Lambda}}+\overline p {\mathbf{\Gamma}}^M = 
\left\{
\begin{array}{cc}
\Lambda_t & t< T^{\Lambda} \\
\left( (1-\overline p)\overline{{\Lambda}}_{t-T^\Lambda}  + \overline p \overline{{\Gamma}}_{t-T^\Lambda} \right) \circ \Lambda_{T^\Lambda}  & t\geq  T^{\Lambda} 
\end{array}  \right.   \, . &
\end{eqnarray}
\begin{figure}[H]
\begin{center}
\includegraphics[width=0.73\textwidth]{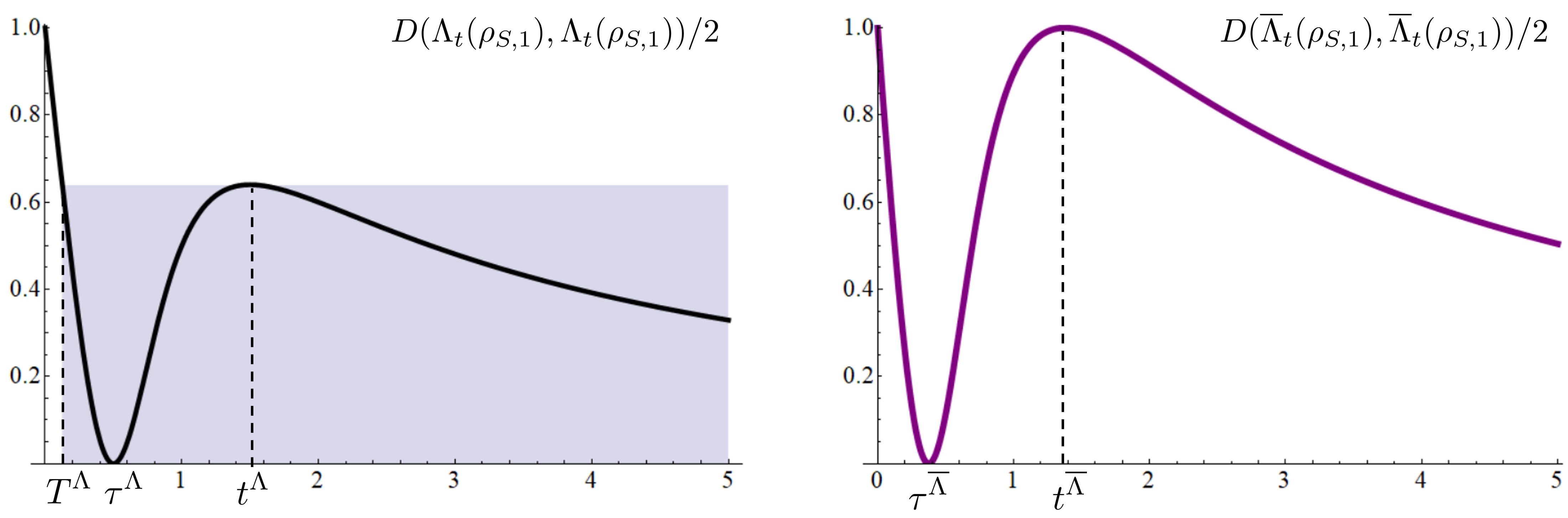} 
\includegraphics[width=0.73\textwidth]{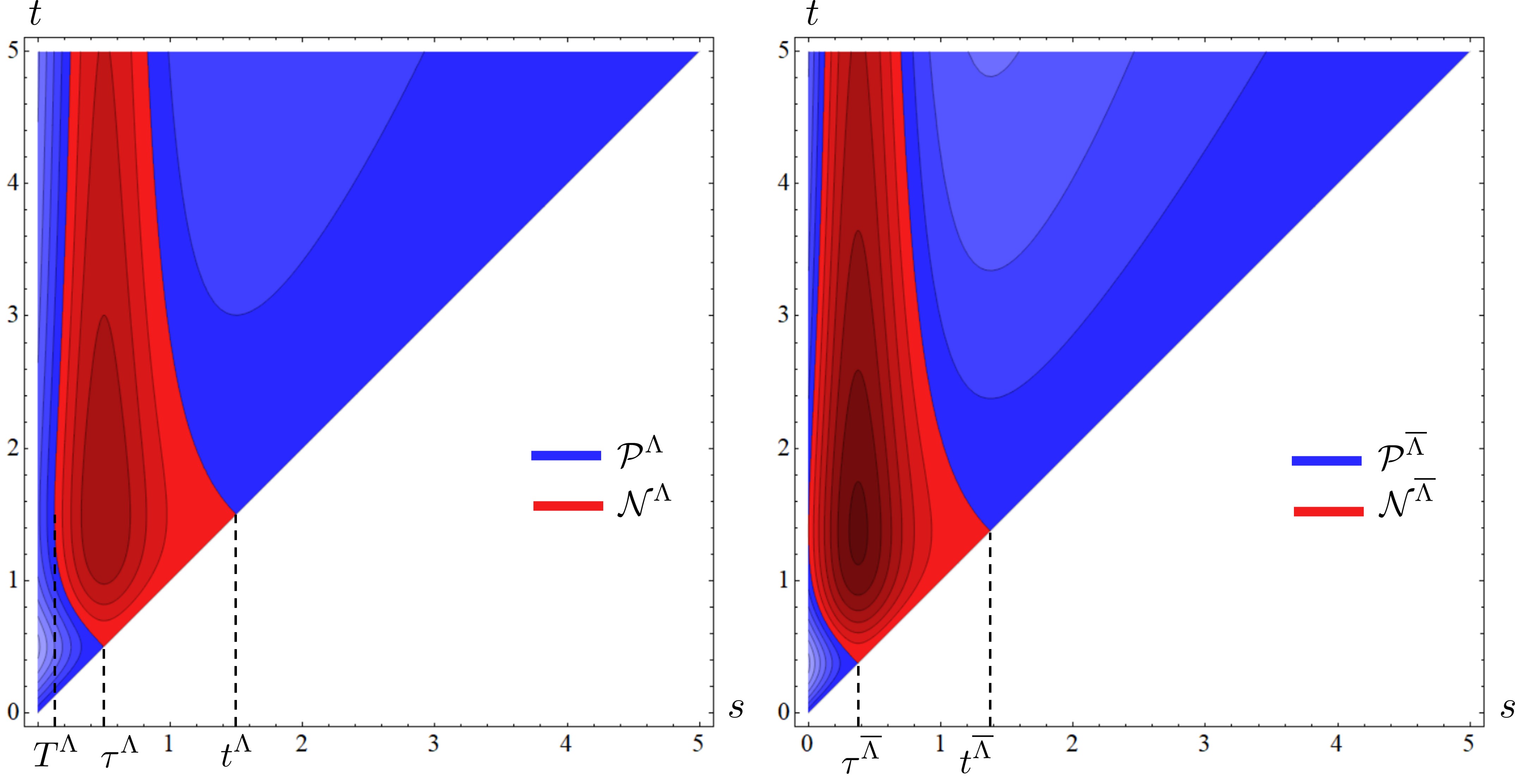}
\end{center}
\caption{
{\bf Top left:}  $D(\Lambda_t(\rho_{S,1}),\Lambda_t(\rho_{S,2}))/2=f(t)$, where $\rho_{S,1,2}$ are orthogonal states and $\mL$ is a NNM depolarizing evolution with $f(t)=(2t-1)^2/(2t^3-t+1)$. 
The NM nature of $f(t)$ is shown after $\TL$: $f(t^\Lambda)-f(s)>0$ for all $s\in(\TL,t^\Lambda)$.
$\tau^\Lambda$ is the first time after which $f(t)$ increases instantaneously.  The total increase of $ f(t)$ is $\Delta=0.64$.
The shaded region represents the PNM core damped by the Markovian pre-processing. All the states are mapped into the maximally mixed state at $\tau^\Lambda=1/2$, when $f(1/2)=0$.
{\bf Top right:} $D(\overline\Lambda_t(\rho_{S,1}),\overline\Lambda_t(\rho_{S,2}))/2=\overline f(t)$, where $\rho_{S,1,2}$ are orthogonal and $\omL$ is a depolarizing model defined by $\overline{f}(t)=f(t+\TL)/f(\TL)$, which is the PNM core of $\mL$. 
All the states are mapped into the maximally mixed state at $\tau^{\overline\Lambda}=1/2$. After this time, the system starts to recover information and at time $t^{\overline\Lambda}=3/2$ all the initial information is restored: $\overline f(t^{\overline\Lambda})=1$ and  $\overline\Lambda_{t^{\overline\Lambda}}=I_S$. The total increase of $\overline f(t)$ is $\overline\Delta=1$.
{\bf Bottom left:} $\mathcal{P}^\Lambda$ and $\mathcal{N}^\Lambda$, the sets of time pairs $\{s,t\}$ when $V_{t,s}$ is respectively CPTP and non-CPTP. Lighter blue (darker red) colours corresponds to larger (larger in modulo)  positive (negative) values of $l_{t,s}=(f(s)-f(t))/4$, which is non-negative if and only if $V_{t,s}$ is CPTP. This evolution is NNM because all the points  $\{s,t\}$ such that $s\leq \TL$ are in $\mathcal P^\Lambda$. Since there exists no CPTP $V_{t,\tau^\Lambda}$ for all $t>\tau^\Lambda$, the time pairs $\{\tau^\Lambda,t\}$ are all inside $\mathcal N^\Lambda$.
{\bf Bottom right:} $\mathcal{P}^{\overline\Lambda}$ and $\mathcal{N}^{\overline\Lambda}$.
Different shades represent different values of $\overline l_{t,s}=(\overline f(s)-\overline f(t))/4 $. Since $\overline V_{t,s}=V_{t+\TL,s+\TL}$, the sets $\mathcal{P}^{\overline\Lambda}$ and $\mathcal{N}^{\overline\Lambda}$ can be obtained by translating $\mathcal{P}^{\Lambda}$ and $\mathcal{N}^{\Lambda}$.
}\label{figdepo}
\end{figure} 
$\!\!\!\!\!\!$Hence, $\mathbf{\Lambda}^{mix} $ is CP-divisible in $t\in[0,T^\Lambda]$ because it corresponds to $\Lambda_t$. Consider the intermediate map $V_{t_2,t_1}^{mix}$ of $\mathbf{\Lambda}^{mix}$, where $T^\Lambda\leq t_1\leq t_2$. We can write it as
\begin{eqnarray}
V^{mix}_{t_2,t_1}\!\!\!&=&\!\!\!{\Lambda_{t_2}}^{mix} \circ ({\Lambda_{t_1}}^{mix})^{-1} =\! \left( \left( (1-\overline p)\overline{{\Lambda}}_{t_2-T^\Lambda}  + \overline p \overline{{\Gamma}}_{t_2-T^\Lambda} \right) \circ \Lambda_{T^\Lambda} \right) \circ \left( \left( (1-\overline p)\overline{{\Lambda}}_{t_1-T^\Lambda}  + \overline p \overline{{\Gamma}}_{t_1-T^\Lambda} \right) \circ \Lambda_{T^\Lambda}  \right)^{-1} = \nonumber \\  \nonumber
\!\!\!&=&\!\!\! \! \left( (1-\overline p)\overline{{\Lambda}}_{t_2-T^\Lambda}  + \overline p \overline{{\Gamma}}_{t_2-T^\Lambda} \right)\! \circ \Lambda_{T^\Lambda}  \circ \Lambda_{T^\Lambda}^{-1}  \circ  \left( (1-\overline p)\overline{{\Lambda}}_{t_1-T^\Lambda}  + \overline p \overline{{\Gamma}}_{t_1-T^\Lambda} \right)^{-1}  = \overline{V}^{mix}_{t_2-T^\Lambda,t_1- T^\Lambda} \, , \end{eqnarray}
which is CPTP because, for $t\geq T^\Lambda$, it is the intermediate map of $\overline{\mathbf{\Lambda}}^{mix}$ (see Eq.~(\ref{incohV})). In summary, the infinitesimal intermediate maps of $\mL^{mix}$ are
\begin{eqnarray} \nonumber
V^{mix}_{t+\epsilon, t}=\left\{
\begin{array}{cc}
V_{t+\epsilon,t} & t< T^{\Lambda} \\
\overline{V}^{mix}_{t-T^\Lambda+\epsilon,t-T^\Lambda} & t\geq  T^{\Lambda} 
\end{array}  \right. \,  ,
\end{eqnarray}
where $V_{t+\epsilon,t}$ are the infinitesimal intermediate maps of $\mathbf{\Lambda}$ which are CPTP for $t\in[0,T^\Lambda]$ and $\overline{V}^{mix}_{t-T^\Lambda+\epsilon,t-T^\Lambda}$, for $t\geq T^\Lambda$,  are the intermediate maps of the Markovian $\overline{\mathbf{\Lambda}}^{mix}$. Hence, $\mathbf{\Lambda}^{mix} = (1-\overline p){\mathbf{\Lambda}}+\overline p {\mathbf{\Gamma}}^M $ is CP-divisible and
\begin{eqnarray} \nonumber
M^{mix}(\mathbf{\Lambda})=\min \{ \,p\,| \,   \exists \mathbf{\Lambda}^M \mbox{ s.t. } (1- p){\mathbf{\Lambda}}+\overline p {\mathbf{\Lambda}}^M \mbox{ is Markovian} \} \leq \overline p = M^{mix}(\overline{\mathbf{\Lambda}})\, .
\end{eqnarray}

\section{Non-invertible depolarizing example}\label{noninvdepo}

We proceed by showing how our framework behaves with a non-bijective depolarizing evolution. 
Consider the characteristic function given by $f(t)=(2t-1)^2/(2t^3-t+1)$ (see Figure \ref{figdepo}). 
This depolarizing evolution is not divisible. As we can see, the evolution is not bijective between $t^{NB}=1/2$ and any later time ($f(1/2)=0$). We cannot define intermediate maps $V_{t,t^{NB}}$ with initial time $t^{NB}$. Nonetheless, Eq.~(\ref{TLambdadepo}) provides a $\TL$ smaller than $t^{NB}$, even without imposing the extra condition $\TL<t^{NB}$. Indeed, $\TL$ is the earliest time such that there exists a time interval $[\TL+\epsilon, t^\Lambda]$ when $V_{t^\Lambda,\TL+\epsilon}$ is not CPTP, while $V_{t^\Lambda,\TL}$ is CPTP (see Proposition~\ref{propDgen}). Hence, $V_{t^\Lambda,\TL+\epsilon}$ not being CPTP implies
$f(t^\Lambda)-f(\TL+\epsilon)>0$, from which we can state that $f(t^\Lambda)>0$. Similarly, from 
 $f(t^\Lambda)-f(\TL)=0$ we have $f(\TL)>0$.

Since evolutions are 
continuous, so are characteristic functions. If $f(0)=1$ and $f(1/2)=0$, there must a be an intermediate time $\TL<1/2$ when the characteristic function assumes the value $f(\TL)>0$ while  $f'(\TL)<0$. This property holds for any possible characteristic function that is zero-valued at one or more times, and therefore finding $\TL$ does not require any additional technique with respect to the divisible case.

Straightforward calculations lead to $\TL=1/8$, $\tau^\Lambda=1/2$ and $t^\Lambda=3/2$. 
Most of the analysis made in Section~\ref{newsecexample} can be done similarly. Nonetheless, we underline some differences coming non-bijectivity. Both $\mL$ and $\omL$ have a time, $\tau^\Lambda=t^{NB}$ and $\tau^{\overline \Lambda}=\overline t^{NB}$ respectively, when {all} the states are mapped into the maximally mixed state, namely when the characteristic function is null.
Nonetheless, only the PNM core $\omL$  \textit{completely retrieves} any possible type of information for at least one time. Indeed,
 $\overline\Lambda_{t^{\overline\Lambda}}=I_S$. For instance, pairs of initially orthogonal states $\rho_{S,1}(t)$, $\rho_{S,2}(t)$ go from perfectly distinguishable ($t=0$), to absolutely indistinguishable ($t=\tau^{\overline\Lambda}$) and back to perfectly distinguishable ($t=t^{\overline\Lambda}$).  

Since we cannot write $V_{t,t^{NB}}$, the corresponding Choi operator lowest eigenvalues $\lambda_{t,t^{NB}}=(1-f(t)/f(t^{NB}))/4$ cannot be evaluated for all $0\leq s \leq t$. Indeed, $f(t^{NB})=0$ and this quantity diverge.  Hence, in Fig. \ref{figdepo} we plot $\mathcal P^\Lambda$ and $\mathcal N^\Lambda$ through the regularization $l_{t,s}=f(s) \lambda_{t,s}=(f(s)-f(t))/4$, which is non-negative if and only if there exists a CPTP $V_{t,s}$ and does not diverge for $s=t^{NB}$. We underline an important subtlety. Even if $V_{t,t^{NB}}$ does not exists, the evolution behaves as non-Markovian in the time intervals $[t^{NB},t]$. Indeed, since Markovianity is defined through CP-divisibility, any other case is labelled as non-Markovian. Undoubtedly, this is the case, where we do not have a specific non-CPTP intermediate map, but since the evolution in $[t^{NB},t]$ is not represented by a CPTP operator, then the evolution must show
 some non-Markovian features. As a matter of fact, the largest non-Markovian features are shown during these time intervals, where states goes from being \textit{identical} to partially, or perfectly (in case of the PNM core and initial orthogonal states), distinguishable. Hence, these effects are not only quantitatively the largest, but they are qualitatively different.

\section{Proof that $T^\Lambda=t_0-t_{0,\alpha}$}\label{TLalpha}

We call $\Lambda^{(\alpha,t_0)}_t$ the dynamical map at time $t$ of the quasi-eternal NM evolution defined by the parameters $\alpha$ and $t_0$. Similarly, we define $V_{t,s}^{(\alpha,t_0)}$.
We need to find the maximum $T$ such that conditions (A) and (B) from Eq.~(\ref{TLambdanuovo}) are satisfied. We start  with condition (B). Since $\Lambda_t^{(\alpha,t_0)}$ is CPTP if and only if $t_0\geq t_{0,\alpha}$ and $V_{t,T}^{(\alpha,t_0)}=\Lambda_{t-T}^{\alpha,t_0-T}$, then $V_{t,T}^{(\alpha,t_0)}$ is CPTP if and only if $T\leq t_0-t_{0,\alpha}$. Instead, we have that an evolution generated from Eq.~(\ref{master}) is CP-divisible in $[0,T]$ if and only if $\gamma_{x,y,z}(t)\geq 0$ for all $t\in [0,T]$. Therefore, condition (A) is satisfied for all $T\leq t_0$. Hence, the maximum $T$ for which conditions (A) and (B)  are simultaneously satisfied is $t_0-t_{0,\alpha}$:  quasi-eternal PNM evolutions have $t_0=t_{0,\alpha}$.

\end{document}